\newcommand{\cond}[1]{\mathop{\rm{cond}}(#1)}
\newcommand{\icond}[1]{\mathop{\rm{icond}}(#1)}
\newcommand{\norm}[1]{\left\|#1\right\|} 
\newcommand{\inv}[1]{#1^{-1}}
\newcommand{\mpinv}[1]{#1^{\dagger}}  
\newcommand\eps{\epsilon}
\newcommand\eref[1]{$(\ref{#1})$}
\newcommand\R{\mathbb{R}}
\newtheorem{theorem}{Theorem}[section]
\newtheorem{lemma}[theorem]{Lemma}
\newcommand{\rank}[1]{\mathop{\rm{rank}}(#1)}
\title{A condition number analysis of an algorithm for solving a system of polynomial equations with one degree of freedom\thanks{Supported in part by
NSF DMS 0434338, NSF CCF 0085969, and a grant from NSERC
(Canada).}}
\author{Gun Srijuntongsiri\thanks{Sirindhorn International Institute of Technology, Thammasat University, 131 Moo 5, Tiwanont Road, Bangkadi,
Muang, Pathum Thani, 12000, Thailand. Email: gun@siit.tu.ac.th.} \and Stephen A. Vavasis\thanks{Department of Combinatorics and Optimization,
University of Waterloo, 200 University Avenue W., Waterloo, ON N2L
3G1, Canada. Email: vavasis@math.uwaterloo.ca.}}
\begin{document}

\maketitle

\begin{abstract}
This article considers the problem of solving a 
system of $n$ real polynomial equations in $n+1$ variables.  We propose an algorithm
based on Newton's
method and subdivision for this problem.  Our algorithm is intended only for
nondegenerate cases, in which case the solution is a 1-dimensional curve.
Our first main contribution is a definition of a condition number 
measuring reciprocal distance to degeneracy
that can distinguish poor and well conditioned instances of this problem.  
(Degenerate problems would be infinitely
ill conditioned in our framework.)
Our second contribution, which is the main
novelty of our algorithm, is an analysis showing that its running
time is bounded in terms of the condition number of the
problem instance as well as $n$ and the polynomial degrees.  
\end{abstract}

\section{Introduction}
We consider the problem of finding all zeros of a polynomial function
$f:[0,1]^{n+1}\rightarrow \R^n$.  The zero-set of such a function will, in
the generic case, be a 1-dimensional algebraic set.  Our algorithm is
enumerative in nature, and therefore is feasible only in the case of small
values of $n$. 
We refer to this problem as the {\em single-degree-of-freedom polynomial 
system problem} (SDPS).

Perhaps the most common application of the SDPS problem is finding the intersection
of two rational or polynomial surfaces, the so-called surface/surface intersection
(SSI) problem.  In this case, one is given two polynomials $p_1,p_2$ both mapping
$\R^2$ to $\R^3$ that each parametrize a surfaces.  
The problem is to find their intersection,
i.e., all points $(s,t,u,v)$ in $[0,1]^4$ such that $p_1(s,t)-p_2(u,v)=0$.
Other applications arise in robotics and motion planning.  A final application
is global optimization in which one finds all the local minimizers by constructing
a network of paths that connect local minimizers (see, e.g., \cite{Leary}).

Our proposed algorithm is a hybrid between subdivision and iterative
methods.  This hybrid idea has been used to solve surface/surface
intersection by Koparkar \cite{koparkar} and to solve line/surface
intersection by Toth \cite{toth}.  The approach is to subdivide the
domain recursively, discard the subdomains found to contain no
solutions, and invoke an iterative method to locate a solution once it
is certain that the iterative method converges.  The convergence tests
used by Toth and Koparkar are both based on contraction mapping and
evaluating ranges of functions.

Our first main contribution, detailed in Section~\ref{section_cond}, is the
definition of a condition number for SDPS problems.
Intuitively, a problem instance is ill conditioned if it 
is close to a degenerate instance.
A degenerate instance is one in which the Jacobian fails to have full rank at 
a root.  Our condition number is the reciprocal of a quantity
related to nearness to
degeneracy.  It is natural to expect that algorithms would have poorer behavior
as the condition number grows larger.

Our algorithm, which is presented in Sections~\ref{sec:algo}--\ref{sec:impdetails},
is similar to Koparkar's in that it subdivides the
parametric domains of the problem until the subdomains pass
certain tests. It uses a bounding volume of a subdomain to exclude
any that cannot have a solution.  Our convergence test is
based on the Kantorovich theorem, which tells us if Newton's
method converges quadratically for the initial point in question
in addition to whether it converges at all.  
For this reason, we
can choose to hold off Newton's method until quadratic convergence
is assured. Kantorovich's theorem is presented in Section~\ref{sec:kanto}.

The main feature of our algorithm is that there is a lower bound on
the size of the smallest hypercube occurring during the course of the
algorithm that depends on the reciprocal condition number of the
problem instance and on the polynomial degrees.  Because our interest is
in the low-dimensional and moderate degree case, we regard the factors
depending on the degrees as `constants' and the dependence on the
condition number as the interesting feature.
This analysis is presented in
Section~\ref{section_analysis}.  A lower bound on the smallest hypercube
size consequently implies an upper bound on the overall running time.  

As mentioned above, the SSI problem is a special case of SDPS with $n=3$.  Since there are many algorithms for SSI proposed in literature, we discuss here the main advantage of our algorithm as compared to other SSI algorithms.  To the best of our knowledge, there is no previous algorithm for SSI in this class whose running time has been bounded in terms of the condition of
the underlying problem instance, and we are not sure whether such an
analysis is possible for previous algorithms.  Indeed, we do not know
of any SSI algorithm in the literature that has any {\em a priori}
bound on the running time.  We do know, however, that some algorithms
do not have this property---their running time can be arbitrarily
large even if the input instance is well conditioned.  This is because
these algorithms can sometimes create degenerate or nearly degenerate
subproblems even though the original input instance is well
conditioned.  Section \ref{section_other} shows in details how
marching methods based on collinear normal points and Koparkar's
algorithm in particular have the capacity to create bad subproblems
from a good instance.  

The notion of bounding the running time of an iterative method in
terms of the condition number of the instance is an old one, with
the most notable example being the condition-number bound of
conjugate gradient (see Chapter 10 of \cite{gvl}). This approach
has also been used in interior-point methods for linear
programming \cite{freund}, Krylov-space eigenvalue computation
\cite{toh}, and the line/surface intersection problem
\cite{srijuntongsiri_lsi}.  We note that a related problem of computing convex hull of points on a plane is shown to always be well-conditioned \cite{jiang}.

An additional motivation, not pursued further herein, for defining a
condition number and condition-aware algorithms like ours is that this
creates the possibility of preconditioning.  {\em Preconditioning},
which has been very successfully applied in numerical linear algebra
(see, e.g., \cite{TrefethenBau}), means improving the condition number
of an instance via some kind of transformation prior to solving it.

We now define the problem under consideration more precisely by
specifying a representation for the input polynomial system.
Let $Z_{i,m}(t)$ denote the \emph{Bernstein polynomials}
\[
Z_{i,m}(t) = \frac{m!}{i!(m-i)!}(1-t)^{m-i}t^i.
\]
We are interested in finding \emph{all} points $x=\left(x_1,x_2,\ldots,x_{n+1}\right)^T \in [0,1]^{n+1}$ satisfying
\begin{equation}
\label{maineq}
f(x) \equiv \sum_{i_1 = 0}^{m_1}\cdots \sum_{i_{n+1}=0}^{m_{n+1}} b_{i_1,  \cdots, i_{n+1}} Z_{i_1,m_1}(x_1) \cdots Z_{i_{n+1},m_{n+1}}(x_{n+1}) = 0,
\end{equation}
where $b_{i_1, \cdots, i_{n+1}} \in \mathbb{R}^n$ $(i_j = 0, 1, \ldots, m_j)$ denote the coefficients, also known as the \emph{control points}.  Therefore, the
problem is specified by the $(m_1+1)(m_2+1)\cdots (m_{n+1}+1)$ control points.  
(See a further remark on this matter in Section~\ref{section_conclusion}).
This form of a multivariate polynomial is sometimes called {\em tensor product} B\'ezier
representation: it presumes 
that the maximum degree of variable $x_i$ (separately) is $m_i$
for each $i=1,\ldots,n+1$.  
Note that $f$ is a function that maps $\mathbb{R}^{n+1}$ to $\mathbb{R}^n$.  
Note that this representation would be intractable
for a large value of $n$, but, as mentioned earlier, our algorithm is
intended for small values such as $n=3$.

The Bernstein basis is known to 
have better numerical stability for polynomials on the unit interval
than the power basis \cite{farouki_rajan, farouki_stable_bernstein},  
and computation using parametric representation is often
much more efficient than other types of surface representations.  
Furthermore,
our algorithm makes direct use of the Bernstein-B\'ezier representation.
In particular, our exclusion test is based
on B\'ezier control points.  It
should be noted that the algorithm proposed in this article can be
generalized to use with parametric surfaces represented by other
polynomial bases provided that an appropriate exclusion test is
available, and a few other properties hold for the basis.  Refer to
\cite{srijuntongsiri_lsi} for a related algorithm for line/surface
intersection that can operate on parametric surfaces represented by
other polynomial bases.

\section{The theorem of Kantorovich}
\label{sec:kanto}

Denote the closed ball centered at $x$ with radius $r>0$ by
\[
\bar{B}(x,r) = \{ y \in \mathbb{R}^n : \norm{y-x} \leq r \},
\]
and let $B(x,r)$ denote the interior of $\bar{B}(x,r)$.
Kantorovich's theorem in affinely invariant form, which is valid for
any norm, is as follows.

\begin{theorem}[Kantorovich, affinely invariant form \cite{deuflhard,kantorovich}]
\label{standardkantorovich}
Let $f : D \subseteq \mathbb{R}^n \rightarrow \mathbb{R}^n$ be differentiable in
the open convex set $D$. Assume that for some point $x^0 \in D$, the Jacobian $f'(x^0)$
is invertible with
\[
\norm{f'(x^0)^{-1}f(x^0)} \leq \eta.
\]
Let there be a Lipschitz constant $\omega > 0$ for $f'(x^0)^{-1} f'$ such that
\[
\norm{f'(x^0)^{-1}(f'(x)-f'(y))} \leq \omega \cdot \norm{x-y} \textrm{ for all } x,y \in D.
\]
If $h = \eta\omega \leq 1/2$ and $\bar{B}(x^0,\rho_-) \subseteq D$, where
\[
\rho_- = \frac{1-\sqrt{1-2h}}{\omega},
\]
then $f$ has a zero $x^*$ in $\bar{B}(x^0,\rho_-)$. Moreover, this zero is the unique zero
of $f$ in $(\bar{B}(x^0,\rho_-) \cup B(x^0,\rho_+)) \cap D$ where
\[
\rho_+ = \frac{1+\sqrt{1-2h}}{\omega}
\]
and the Newton iterates $x^k$ defined by
\[
x^{k+1} = x^k - f'(x^k)^{-1}f(x^k)
\]
are well-defined, remain in $\bar{B}(x^0,\rho_-)$, and converge to $x^*$. In addition,
\[
\norm{x^*-x^k} \leq \frac{\eta}{h}\left( \frac{(1-\sqrt{1-2h})^{2^k}}{2^k} \right), k = 0,1,2,\ldots
\]
\end{theorem}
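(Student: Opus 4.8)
The plan is to reduce everything to the classical one-dimensional majorant argument. First I would exploit the affine invariance: replacing $f$ by $g = f'(x^0)^{-1}f$ leaves the Newton iterates unchanged, since $g'(x) = f'(x^0)^{-1}f'(x)$ and hence $g'(x^k)^{-1}g(x^k) = f'(x^k)^{-1}f(x^k)$, while $g'(x^0) = I$, $\norm{g(x^0)} \le \eta$, and $g'$ still has Lipschitz constant $\omega$ on $D$. So without loss of generality assume $f'(x^0) = I$. Next I introduce the majorizing quadratic $p(t) = \frac{\omega}{2}t^2 - t + \eta$, whose two roots are exactly $\rho_-$ and $\rho_+$ precisely when $h = \eta\omega \le 1/2$, together with its scalar Newton sequence $t_0 = 0$, $t_{k+1} = t_k - p(t_k)/p'(t_k)$. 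An elementary analysis of this scalar iteration shows that $\{t_k\}$ increases monotonically, stays in $[0,\rho_-)$, and converges to $\rho_-$; I would also record the identities $-p'(t_k) = 1 - \omega t_k > 0$ and $p(t_{k+1}) = \frac{\omega}{2}(t_{k+1}-t_k)^2$ (the latter because $p$ is quadratic and the Newton step kills its linear Taylor part).

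The core of the argument is an induction establishing, simultaneously for every $k$: (i) $x^k$ is well-defined and lies in $\bar{B}(x^0,\rho_-) \subseteq D$; (ii) $f'(x^k)$ is invertible with $\norm{f'(x^k)^{-1}} \le 1/(1-\omega t_k)$; and (iii) $\norm{x^{k+1}-x^k} \le t_{k+1}-t_k$. For (ii) I would invoke the Neumann-series (Banach) lemma, since $\norm{f'(x^k) - I} \le \omega\norm{x^k - x^0} \le \omega t_k < 1$. For (iii), the defining property of the Newton step gives $f(x^k) = f(x^k) - f(x^{k-1}) - f'(x^{k-1})(x^k - x^{k-1}) = \int_0^1\big(f'(x^{k-1}+\tau(x^k-x^{k-1})) - f'(x^{k-1})\big)(x^k-x^{k-1})\,d\tau$, whence $\norm{f(x^k)} \le \frac{\omega}{2}\norm{x^k-x^{k-1}}^2 \le \frac{\omega}{2}(t_k-t_{k-1})^2 = p(t_k)$, and then $\norm{x^{k+1}-x^k} = \norm{f'(x^k)^{-1}f(x^k)} \le p(t_k)/(1-\omega t_k) = t_{k+1}-t_k$. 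Telescoping, $\norm{x^k - x^0} \le \sum_{j<k}(t_{j+1}-t_j) = t_k \le \rho_-$, which re-establishes (i) at the next index. Since $\sum_j (t_{j+1}-t_j) = \rho_- < \infty$, the sequence $\{x^k\}$ is Cauchy, so it converges to some $x^* \in \bar{B}(x^0,\rho_-)$; continuity of $f$ together with $\norm{f(x^k)} \le p(t_k) \to 0$ gives $f(x^*) = 0$, and letting the telescoping sum run to infinity yields $\norm{x^* - x^k} \le \rho_- - t_k$.

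For uniqueness, let $y^*$ be any zero of $f$ in $(\bar{B}(x^0,\rho_-)\cup B(x^0,\rho_+))\cap D$. Writing $0 = f(y^*) - f(x^*) = M(y^*-x^*)$ with $M = \int_0^1 f'(x^*+\tau(y^*-x^*))\,d\tau$, I would bound $\norm{M - I} \le \omega\int_0^1\norm{x^*+\tau(y^*-x^*)-x^0}\,d\tau \le \frac{\omega}{2}\big(\norm{x^*-x^0}+\norm{y^*-x^0}\big) < \frac{\omega}{2}(\rho_-+\rho_+) = 1$, so $M$ is invertible and $y^* = x^*$ (the borderline case $h = 1/2$, where the union collapses to a single closed ball, needs a short limiting argument). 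Finally, to obtain the explicit a posteriori bound I would solve the scalar recurrence in closed form: the Mobius substitution $u_k = (t_k-\rho_-)/(t_k-\rho_+)$ linearizes Newton's iteration on $p$ into $u_k = u_0^{2^k}$ with $u_0 = \rho_-/\rho_+ = (1-\sqrt{1-2h})/(1+\sqrt{1-2h})$; unwinding this, substituting into $\norm{x^*-x^k} \le \rho_- - t_k$, and simplifying gives the stated bound $\norm{x^*-x^k} \le (\eta/h)\big((1-\sqrt{1-2h})^{2^k}/2^k\big)$.

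I expect the main obstacle to be the bookkeeping of the simultaneous three-part induction — arranging that (i), (ii), (iii) are each available precisely where the next step needs them, and in particular keeping the domain condition $\bar{B}(x^0,\rho_-) \subseteq D$ live at every stage — with a secondary difficulty in carrying out the scalar recurrence solution carefully enough to land on the exact constant $\eta/h$ rather than a looser geometric-type bound.
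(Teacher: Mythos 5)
Your proposal is essentially sound, but note that the paper does not prove this theorem at all: it is quoted as a known classical result and attributed to the cited references (Kantorovich; Deuflhard), so there is no ``paper proof'' to diverge from. What you sketch is the standard affine-invariant majorant proof found in those sources, and the main line is correct: the normalization $g=f'(x^0)^{-1}f$, the quadratic majorant $p(t)=\frac{\omega}{2}t^2-t+\eta$ with roots $\rho_\pm$, the three-part induction via the Banach/Neumann lemma and the identity $p(t_{k+1})=\frac{\omega}{2}(t_{k+1}-t_k)^2$, and the closed-form solution of the scalar recursion (the squaring of $u_k=(t_k-\rho_-)/(t_k-\rho_+)$ together with $(1+s)^{2^k}-(1-s)^{2^k}\ge 2^{k+1}s$ does land exactly on the constant $\eta/h$). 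Two borderline points deserve more than the gestures you give them. First, your uniqueness argument via $\norm{M-I}<1$ is complete only for $h<1/2$; at $h=1/2$ the bound degrades to $\norm{M-I}\le 1$, and a ``limiting argument'' is not obviously available since $h$ is fixed by the data and, in the norms the paper actually uses (the $\infty$-norm), balls are not strictly convex, so one cannot argue that the segment leaves the sphere. The standard repair is the induction $\norm{y^*-x^{k+1}}\le \frac{\omega\norm{y^*-x^k}^2}{2(1-\omega t_k)}$ together with the scalar identity $\rho_+-t_{k+1}=\frac{\omega(\rho_+-t_k)^2}{2(1-\omega t_k)}$, which gives $\norm{y^*-x^k}\le\rho_+-t_k\to 0$ precisely when $\rho_+=\rho_-$. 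Second, the M\"obius substitution also degenerates at $h=1/2$ ($u_0=1$); there one should instead note $\rho_--t_{k+1}=\frac{1}{2}(\rho_--t_k)$, so $\rho_--t_k=\rho_-/2^k=\frac{\eta}{h}\cdot\frac{1}{2^k}$, which matches the stated bound. With those two cases filled in, your argument is a complete and faithful reconstruction of the cited theorem.
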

We call $x^0$ a \emph{fast starting point} if the quantity $h$
defined above satisfies $h \leq 1/4$ and $\bar{B}(x^0,\rho_-)
\subseteq D$.  In this case, quadratic convergence of the iterates
starting from $x^0$ is implied.

\section{A condition number of a polynomial system with one degree of freedom}
\label{section_cond}

In this section we propose a definition for a condition number of the SDPS problem and prove that our condition number is related to the distance
from degeneracy.  Let $M$ denote the maximum $p$-norm among the control points of $f$ for some $p$.  
Later on, we will specialize to the infinity norm.
It is easy to show that this quantity satisfies
the axioms of a norm, so we will write $M$ also as $\Vert f\Vert$.
Define the {\em condition number} of $f$ to be
\begin{equation}
\label{cond_def} \cond{f} = M\cdot \max_{x \in [0,1]^{n+1}} \left( \min
\left\{{\frac{1}{\norm{f(x)}}}, \norm{\mpinv{f'(x)}}\right\}
\right).
\end{equation}
Here, the notation $A^\dagger$
means $A^\dagger = A^T(AA^T)^{-1}$.  In the case that
the rank of $A$ is $n$, this corresponds to the Moore-Penrose
pseudo-inverse of $A$, typically denoted as $A^+$.  In general,
the Moore-Penrose pseudo-inverse is defined for matrices of all ranks.
In this paper, however, we need $A^\dagger$ only in the case that
$\rank(A)=n$; we will 
take the second factor $\norm{\mpinv{f'(x)}}$ of \eref{cond_def}
to be $\infty$ if the rank of $f'(x)$ is
less than $n$.
Similarly, the first factor is taken to be $\infty$ if $f(x)=0$.
Note that small condition number means the problem is
well-conditioned.
Indeed, it follows
from \eref{eq:fMbound} and \eref{eq:Mpinvineq} below
that both terms in the min have a lower bound of  $\mbox{const}/M$.

The rationale for this definition is that, as mentioned earlier,
a degenerate instance has a point $x$ such that $f(x)=0$ and
$\rank(f'(x))<n$.  For such a point, both terms occurring in the min
of \eref{cond_def}
are infinity, i.e., the condition number is infinite.  Thus, the
problem is ill-conditioned
if $f(x)$ is close to zero and $f'(x)$ is close
to rank-deficiency at the same point $x$.

The inclusion of the factor
of $M$ makes the definition scale-invariant.  Note that $M$ depends
on the choice of basis, namely, tensor-product Bernstein-B\'ezier basis,
whereas the other factors in the condition number
are basis-independent.  The dependence on
basis, however, is only up to a scalar factor depending
on degree.  This is because polynomials are a finite-dimensional vector
space, hence all norms are equivalent.  We could redefine $M$ in a 
basis-independent manner as follows:
$$M_{\rm BI}=\max_{x\in[0,1]^{n+1}} \Vert f(x)\Vert.$$
It follows from \eref{eq:fMbound} and \eref{theta_eq}
below that $M_{\rm BI}$ and $M$ differ by a scalar
that is bounded in terms of the degrees.
The definition $M_{\rm BI}$, however, 
would be harder to compute
in practice.

This condition number is similar to the definition of
$\tilde \kappa(f)$  of Cucker et al.\ \cite{Cuckeretal} 
for the problem of computing isolated roots of polynomial
systems $\R^{n}\rightarrow\R^n$.
Our definition is somewhat simpler than theirs, however,
because
of the assumption we have imposed
that the domain of interest
is $[0,1]^{n+1}$ rather than all of $\mathbb{R}^{n+1}$.  This simplifying
assumption obviates the need for introducing projective
space and scaling of coefficients as in \cite{Cuckeretal}.

The classical Turing Theorem \cite{Chaitin} states that the condition
number of a matrix, which bounds the iteration count of the conjugate
gradient algorithm, is exactly the reciprocal of the relative distance
of the matrix to singularity.  Similarly, Shub and Smale show that
their condition number for a homogeneous polynomial system is equal to
the distance of the system to singularity \cite{shub}.  We now derive
a result showing that our proposed condition number is also related
to
the distance of an SDPS instance to degeneracy.

Before stating and proving the theorem, we
require two well known bounds concerning polynomials 
in Bernstein-B\'ezier form:
\begin{equation}
\Vert f(x)\Vert \le \Vert f\Vert
\label{eq:fMbound}
\end{equation}
for all $x\in[0,1]^{n+1}$.
This follows because every value of $f$ over the
parametric domain is a convex combination
of control points \cite{Farin}.
Next, 
\begin{equation}
\Vert f'(x)\Vert_\infty \le \norm{f'}_\infty \leq 2(n+1)\max(m_1, m_2, \ldots, m_{n+1})\Vert f\Vert_\infty
\label{eq:fpMbound}
\end{equation}
for all $x\in[0,1]^{n+1}$.
This follows because
the control points of a column of $f'$ (i.e., a partial derivative of $f$)
are finite differences of control points
of $f$ multiplied by the degree in the direction of differentiation
as in \eref{eq:bezderiv1}.  A
factor of $2$ comes from the taking of finite differences, and a further
factor of $n+1$ arises from the fact that the infinity norm is a sum
over rows (not columns) of the derivative.  Furthermore, applying the equivalence of norms to \eref{eq:fpMbound} yields
\begin{equation}
\label{eq:ebound}
\norm{f'} \leq 2\alpha_n(n+1)\max(m_1, m_2, \ldots, m_{n+1})\Vert f\Vert
\end{equation}
for any arbitrary $p$-norm, where $\alpha_n > 0$ is a scalar depending on $n$ and the choice of norm.

Now, finally, we come to the main theorem of this
section.
\begin{theorem}
Let $f:[0,1]^{n+1}\rightarrow\mathbb{R}^{n}$ be a polynomial function of degrees 
$m=(m_1,m_2,\ldots, m_{n+1})$ in its $n+1$ variables, and assume that it
is nondegenerate, i.e., there is no $x\in[0,1]^{n+1}$ such that
$f(x)=0$ and $\rank(f'(x))<n$.
Then any $\tilde f$ satisfying
\begin{equation}
\frac{\Vert \tilde f-f\Vert}{\Vert f\Vert} \le \frac{c_m}{\cond{f}}
\label{eq:hatfminusf1}
\end{equation}
is nondegenerate, where $c_m$ is a scalar depending on the degrees and the choice of norm.

Conversely,
there exists a degenerate 
polynomial $\hat f$ such that
\begin{equation}
\frac{\Vert \hat f - f\Vert}{\Vert f\Vert}
\le \frac{ c_m'}{\cond{f}},
\label{eq:hatfminusf2}
\end{equation}
where
$c_m'$ is another scalar depending on the degrees.
\label{thm:disttodegen}
\end{theorem}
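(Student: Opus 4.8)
The plan is to establish the two assertions separately, reading off from \eref{cond_def} how small a perturbation can drive $f$ and its Jacobian simultaneously to the degenerate configuration $f(x)=0$, $\rank{f'(x)}<n$. Recall throughout that $\Vert f\Vert=M$, so the hypotheses \eref{eq:hatfminusf1}--\eref{eq:hatfminusf2} are just statements about $\Vert\tilde f-f\Vert$ and $\Vert\hat f-f\Vert$ relative to $M$.

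\emph{Forward direction.} Suppose for contradiction that some $\tilde f$ satisfies \eref{eq:hatfminusf1} yet is degenerate, say $\tilde f(x^*)=0$ and $\rank{\tilde f'(x^*)}<n$. Applying \eref{cond_def} at $x^*$ gives $\min\{1/\Vert f(x^*)\Vert,\ \norm{\mpinv{f'(x^*)}}\}\le\cond{f}/M$, so one of two cases occurs. In the first, $\Vert f(x^*)\Vert\ge M/\cond{f}$; applying \eref{eq:fMbound} to the polynomial $f-\tilde f$ and using $\tilde f(x^*)=0$ yields $\Vert f-\tilde f\Vert\ge\Vert f(x^*)\Vert\ge M/\cond{f}$, contradicting \eref{eq:hatfminusf1} as soon as $c_m<1$. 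In the second, $\norm{\mpinv{f'(x^*)}}\le\cond{f}/M<\infty$, so $f'(x^*)$ has full rank $n$ and, up to a norm-equivalence constant depending only on $n$, its smallest singular value is at least $M/\cond{f}$; since $\tilde f'(x^*)$ is rank-deficient, the Eckart--Young theorem forces $\Vert f'(x^*)-\tilde f'(x^*)\Vert$ to be at least a constant (depending on $n$) times $M/\cond{f}$, while \eref{eq:ebound} applied to $f-\tilde f$ bounds the same quantity above by $2\alpha_n(n+1)\max_i m_i\cdot\Vert f-\tilde f\Vert$. These two inequalities are incompatible with \eref{eq:hatfminusf1} once $c_m$ is taken below a threshold depending on $n$, $m$, and the norm. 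Letting $c_m$ be the minimum of the thresholds from the two cases finishes this direction.

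\emph{Converse direction.} Let $x^0\in[0,1]^{n+1}$ attain the maximum in \eref{cond_def} (it is attained: $[0,1]^{n+1}$ is compact and, by nondegeneracy of $f$, the maximand is finite-valued and upper semicontinuous). Then both $\Vert f(x^0)\Vert\le M/\cond{f}$ and $\norm{\mpinv{f'(x^0)}}\ge\cond{f}/M$, the latter meaning that the smallest singular value of $f'(x^0)$ is at most a constant (depending on $n$) times $M/\cond{f}$. The plan is to build $\hat f$ by two corrections of small norm. First, take $g_1$ to be the constant polynomial identically equal to $f(x^0)$: in tensor-product Bernstein--B\'ezier form all its control points equal $f(x^0)$, so $\Vert g_1\Vert=\Vert f(x^0)\Vert\le M/\cond{f}$ and $g_1'\equiv 0$. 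Second, by Eckart--Young pick a matrix $E$ with $\Vert E\Vert$ at most a constant (depending on $n$) times $M/\cond{f}$ such that $f'(x^0)-E$ is rank-deficient, and take $g_2$ to be the total-degree-one polynomial $x\mapsto E(x-x^0)$; then $g_2(x^0)=0$, $g_2'\equiv E$, and, writing $g_2$ in the tensor-product Bernstein basis of degrees $m$ (legitimate since each $m_i\ge1$ whenever $f$ genuinely depends on all its variables), its control point at multi-index $(i_1,\ldots,i_{n+1})$ is $E\,(i_1/m_1-x^0_1,\ldots,i_{n+1}/m_{n+1}-x^0_{n+1})^T$, whose argument has every coordinate in $[-1,1]$, so $\Vert g_2\Vert$ is bounded by a scalar depending on $n$ times $\Vert E\Vert$, hence by a scalar depending on $n$ times $M/\cond{f}$. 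Now set $\hat f=f-g_1-g_2$: then $\hat f(x^0)=f(x^0)-f(x^0)-0=0$ and $\hat f'(x^0)=f'(x^0)-0-E$ is rank-deficient, so $\hat f$ is degenerate, while $\Vert\hat f-f\Vert\le\Vert g_1\Vert+\Vert g_2\Vert\le c_m'M/\cond{f}$ with $c_m'$ a scalar depending on $n$ (hence on the degrees and the norm), which is \eref{eq:hatfminusf2}.

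\emph{Where the work is.} The two case distinctions above and the two-step construction carry the idea; the only real labor is bookkeeping of norms, since \eref{cond_def} uses the chosen $p$-norm for $\Vert f(x)\Vert$ and $\norm{\mpinv{f'(x)}}$, Eckart--Young is cleanest in the spectral norm, and \eref{eq:fpMbound}--\eref{eq:ebound} introduce their own degree- and dimension-dependent factors --- each such conversion contributes a constant that must be tracked into $c_m$ or $c_m'$. One should also check that the constant and degree-one auxiliary polynomials really lie in the tensor-product Bernstein--B\'ezier space of degrees $m$, and be slightly careful with strict versus non-strict inequality in \eref{eq:hatfminusf1} so that the boundary case is excluded; none of this is deep.
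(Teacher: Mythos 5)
Your proof is correct, and its overall architecture matches the paper's: the forward direction is the paper's two-case argument (driven by which term of the min in \eref{cond_def} is active at the relevant point) stated contrapositively, using \eref{eq:fMbound} for the function-value case and a singular-value perturbation bound together with \eref{eq:ebound} for the Jacobian case; the converse uses the same decomposition $\hat f(x) = f(x) - f(x^0)\,\kappa - S\,(x-x^0)$ at the point $x^0$ attaining the max in \eref{cond_def}, where at that point both $\norm{f(x^0)}\le M/\cond{f}$ and $\norm{\mpinv{f'(x^0)}}\ge\cond{f}/M$ hold. The one genuinely different ingredient is your choice of the rank-reducing matrix in the converse: the paper takes $S=rq^T/(q^Tq)$ with $q=\mpinv{f'(x^0)}r$ for a norm-attaining unit vector $r$, and then must verify degeneracy by exhibiting two independent null vectors of $f'(x^0)-S$ (treating the already rank-deficient case separately), whereas you invoke Eckart--Young/SVD truncation, which gives a perturbation $E$ with $\Vert E\Vert_2=\sigma_n(f'(x^0))\le \mbox{const}\cdot M/\cond{f}$ and makes $f'(x^0)-E$ of rank at most $n-1$ immediately, handling both of the paper's sub-cases uniformly. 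What your route buys is a shorter degeneracy verification; what it costs is an extra spectral-to-$p$-norm conversion (a constant depending on $n$, which the paper's argument also essentially incurs through $q^Tq=\Vert q\Vert_2^2$). Your bound on the affine correction via the Greville-abscissa form of its Bernstein control points is a legitimate alternative to the paper's estimate $\Vert S\Vert\cdot\Vert\tau\Vert$ with $\Vert\tau\Vert\le c_m$, and the constant-tracking you defer (norm equivalences, the \eref{eq:fromgvl}-type bounds, the threshold on $c_m$ in the Jacobian case) is exactly the bookkeeping the paper carries out, so no gap remains.
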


\noindent
{\bf Remark 1.} 
As mentioned above, the vector and matrix norms appearing in 
\eref{eq:hatfminusf1}, \eref{eq:hatfminusf2}
may be any of the standard $p$-norms, although later we will specialize
to the infinity norm.
Inequalities \eref{eq:hatfminusf1} and \eref{eq:hatfminusf2} involve
the norm of the polynomial function.  
As mentioned above, we take this to mean the maximum norm control point
when written in Bernstein-B\'ezier form.  

\noindent
{\bf Remark 2.}  Let $\mathcal{D}$ denote the set of 
degenerate polynomials, that is, those polynomials $f$
such that $\cond{f}=\infty$.
This theorem shows that 
our condition number is, up to constant factors,
the reciprocal distance of
$f$ to $\mathcal{D}$ scaled by the norm of $f$.
Consider a larger set $\mathcal{D}'$
of degenerate polynomials defined as follows.
Polynomial $f\in\mathcal{D}'$ if
there is any point $x$ in $\mathbb{C}^{n+1}$ (not merely $[0,1]^{n+1}$)
such that $f(x)=0$ and $f'(x)$ is rank deficient. This set 
$\mathcal{D'}$ is an algebraic variety, i.e., the coefficients of such $f$'s are
the roots of a polynomial system.  This means that we can apply
Demmel's theorem \cite{Demmel} to conclude that the 
expected logarithm of
the condition number of a random instance is modest. (Clearly, the distance of 
a polynomial $f$ to
$\mathcal{D}$ is bounded below by the distance of $f$ to $\mathcal{D}'$.)
 We can also apply the more
recent analysis of B\"urgisser et al.\ \cite{burgisser}
to show that the `smoothed'
condition number \cite{spielman} is modest, i.e., for any polynomial $f$
(even a degenerate one), if we select a random small perturbation
of it, then the resulting polynomial is expected to have
a modest logarithmic condition number.

\begin{proof}
Let $\tilde f$ satisfy \eref{eq:hatfminusf1} and
let $e=f-\tilde f$ (i.e., a polynomial), so that
$\Vert e\Vert\le c_m\Vert f\Vert /\cond{f}$.
Choose an
$x\in[0,1]^{n+1}$.  By definition of $\cond{f}$,
$$M\cdot\min(1/\Vert f(x)\Vert,\Vert \mpinv{f'(x)}\Vert)\le\cond{f}.$$
We take two cases depending on which term achieves the min.
First, suppose $M/\Vert f(x)\Vert\le\cond{f}$.  In this case,
$\Vert f(x)-\tilde f(x)\Vert=\Vert e(x)\Vert \le \Vert e\Vert
\le c_m\Vert f\Vert/\cond{f}$. 
 Assume $c_m$ is sufficiently small so that $c_m<1/3$.
Then
\begin{eqnarray*}
\frac{1}{\Vert \tilde f(x)\Vert} &\le &\frac{1}{\Vert f(x)\Vert-\Vert f(x)-\tilde
f(x)\Vert} \\
&\le & \frac{1}{\Vert f\Vert/\cond{f}-c_m\Vert f\Vert/\cond{f}} \\
&\le & 1.5\cond{f}/\Vert f\Vert.
\end{eqnarray*}
In particular, $\tilde f(x)\ne 0$.

For the other case, the hypothesis is
$\Vert f\Vert\cdot \Vert \mpinv{f'(x)}\Vert \le \cond{f}$.
We recall that 
\begin{equation}
\label{eq:fromgvl}
\frac{1}{\sigma_n(f'(x))\sqrt{n}}
\le \Vert \mpinv{f'(x)}\Vert_\infty \le
\frac{\sqrt{n+1}}{\sigma_n(f'(x))}
\end{equation}
(see \cite[(2.3.11) and \S5.5.4]{gvl}),
where $\sigma_n(\cdot)$ is notation for the $n$th
singular value of a matrix.  By the equivalence of norms, \eref{eq:fromgvl} is equivalent to
\[
\frac{\delta_n}{\sigma_n(f'(x))\sqrt{n}}
\le \Vert \mpinv{f'(x)}\Vert \le
\frac{\gamma_n\sqrt{n+1}}{\sigma_n(f'(x))}
\]
for any arbitrary $p$-norm, where $\delta_n$ and $\gamma_n$ are positive constants that depend on $n$ and the choice of norm.

A second fact from numerical linear algebra is that
$$|\sigma_i(A)-\sigma_i(B)|\le \Vert A-B\Vert_2\le \sqrt{n}\Vert A-B\Vert$$
(see \cite[(2.3.11) and Cor.~8.6.2]{gvl}).  
Combining these facts together with \eref{eq:ebound} yields
\begin{eqnarray*}
\sigma_n(\tilde f'(x)) &=& \sigma_n(f'(x)-e'(x)) \\
&\ge & \sigma_n(f'(x))- \sqrt{n}\Vert e'\Vert \\
&\ge & \sigma_n(f'(x))- 2\alpha_n\sqrt{n}(n+1)\max(m_1,\ldots,m_{n+1}) \Vert e\Vert \\
&\ge & \frac{\delta_n}{\sqrt{n}\Vert\mpinv{f'(x)}\Vert} - \frac{2\alpha_n\sqrt{n}(n+1)\max(m_1,\ldots,m_{n+1})c_m\Vert f\Vert}{\cond{f}}
\\
&\ge & \frac{\delta_n \Vert f\Vert}{\sqrt{n}\cond{f}} - \frac{2\alpha_n\sqrt{n}(n+1)\max(m_1,\ldots,m_{n+1})c_m\Vert f\Vert}{\cond{f}}.
\end{eqnarray*}
If we select $c_m<\delta_n/\left(2\alpha_n n(n+1)\max(m_1,\ldots,m_{n+1}) \right)$, then we are assured in this case
that $\sigma_n(\tilde f'(x))>0$, i.e., the rank of $\tilde f'(x)$ is
$n$.

Thus, combining the cases, we have shown that for all $x\in[0,1]^{n+1}$, 
either
$\tilde f(x)\ne 0$ or the rank of $\tilde f'(x)$ is at least $n$, 
thus proving that $\tilde f$ is not degenerate.

Next, let us turn to \eref{eq:hatfminusf2}.   
In this proof, we will drop the prime from 
$c_m'$  and indeed will allow $c_m$ to denote a constant
depending on the degrees that may change from line to line.

Let $x_0\in[0,1]^{n+1}$ be the point where the max in \eref{cond_def} is
achieved, and let $k$ be the value of this max, i.e., the
value of $\cond{f}$. This means, first, that
$\Vert f(x_0)\Vert\le M/k$.  Second, it
means that either (i)
$\Vert \mpinv{f'(x_0)}\Vert\ge k/M$,  or (ii) $\rank(f'(x_0))<n$.
Let $A=f'(x_0)$. 

If case (i) is true, then by definition of the matrix
$p$-norm there exists a unit vector (in the norm under consideration) 
$r$
such that $\Vert A^T(AA^T)^{-1}r\Vert\ge k/M$.  Let $q=A^T(AA^T)^{-1}r$,
so that $\Vert q\Vert \ge k/M$ and $Aq=r$.  
Let $S=rq^T/(q^Tq)$ (a $n\times (n+1)$
matrix) and define $\hat f(x)\equiv f(x)-f(x_0)-S(x-x_0)$.  
In case (ii) (when $\rank(f'(x_0))<n$), define
$\hat f(x)\equiv f(x)-f(x_0)$, i.e., take $S=0$, and we do not need
$r$ and $q$.

For either case,
we now must establish that $\hat f$ is a degenerate instance
and that inequality \eref{eq:hatfminusf2} holds.

First, let us establish the degeneracy of $\hat f$.  Clearly $x_0$
is a root of $\hat f$.  In case (ii), $\hat f'(x_0)=f'(x_0)$, a matrix
whose rank is less than $n$.
In case (i),
$\hat f'(x_0)=f'(x_0)-S=A-rq^T/(q^Tq)$.
We claim that $\hat f'(x_0)$ has at least two independent vectors in its null space,
which implies that its rank is at most $n-1$.  Observe first that $A$, as 
an $n \times (n+1)$ matrix, must have a nonzero vector $w$ in its null space.
Then $w$ is also in the null space of $\hat f'(x_0)$ since $Aw=0$
and $q^Tw=r^T(AA^T)^{-1}Aw=0$.  Also, $\hat f'(x_0)$ has $q$ in
its null space as $\hat f'(x_0)q$ evaluates to $Aq-r$.  Finally, $q$ and
$w$ are independent since $Aw=0$ whereas $Aq=r$, which
is not zero.  This concludes the argument that $\hat f$ is
degenerate.  

Next, consider $\Vert\hat f-f\Vert$ appearing in
\eref{eq:hatfminusf2}.  
Let us introduce the following notation for this
argument: $\tau$ denotes the polynomial $x\mapsto x-x_0$ and
$\kappa$ denotes the constant real-valued polynomial 1. 
With these definitions, $\hat f-f=f(x_0)\kappa + S\tau$.
Note that $\Vert \tau\Vert \le c_m$ since $\Vert x_0\Vert\le 1$.
Also, $\Vert \kappa \Vert =1$.
We have the following chain of inequalities,
which applies to case (i).
 The first line involves
norms of polynomials, whereas the remaining lines are norms of vectors
and matrices.  
\begin{eqnarray}
\Vert\hat f-f\Vert & = &
\Vert f(x_0)\kappa + S\circ \tau\Vert \nonumber \\
&\le & c_{m} \left(\Vert f(x_0)\Vert +
\Vert S\Vert\right) \nonumber\\
&\le & c_{m} \left(\frac{M}{k} + 
\Vert rq^T/(q^Tq)\Vert \right)\nonumber \\
&\le& c_{m} \left(\frac{M}{k} +  \Vert r\Vert \cdot \Vert q^T/(q^Tq)\Vert \right)\nonumber\\
&\le &c_{m} \left(\frac{M}{k} +  \Vert r\Vert \cdot 1/\Vert q\Vert \right)\nonumber\\
&\le &c_{m} \left(\frac{M}{k} +  \Vert r\Vert \cdot M/k \right)
\nonumber \\
&\le &c_{m} \frac{M}{k}. \label{eq:hatfminusf_numer}
\end{eqnarray}
The first line follows from the definition of $\hat f$. The second uses the
fact that $\Vert \kappa\Vert$ and $\Vert \tau\Vert$
are bounded by constants.  The third line uses the inequality established
earlier that $\Vert f(x_0)\Vert\le M/k$ and also
expands the definition of $S$.  The fifth line uses the fact that
$q^Tq=\Vert q\Vert^2_2$, and the 2-norm and the $p$-norm under
consideration
are related by constants  depending on $m_1, \ldots, m_{n+1}$.  
The last line uses the assumption that $r$
is a unit vector.

For case (ii), \eref{eq:hatfminusf_numer} also holds since $S=0$, so
the result is already established by the third line in the above chain
of inequalities.
Since the denominator occurring in the left-hand side of
\eref{eq:hatfminusf2} is $M$,
and
recalling that $k=\cond{f}$, we see that
\eref{eq:hatfminusf_numer} proves the theorem.
\end{proof}

Although the condition number defined by \eref{cond_def} is scale invariant
(i.e., $\cond{f}\equiv\cond{cf}$ for $c\ne 0$), it is not
affinely invariant.
In other words, if $A$ is a nonsingular $n \times n$ matrix, then
in general $\cond{f}\ne\cond{Af}$.  On the other hand, our algorithm
is affinely invariant as we shall see in Section~\ref{sec:algo}.  Therefore,
we can define a new condition number that is indeed affinely invariant,
which is as follows:
\begin{equation}
\icond{f}=\inf_{A\in GL(n,\mathbb{R})}\cond{Af}.
\label{eq:invarcond}
\end{equation}
Here, $GL(n,\mathbb{R})$ denotes the set of all $n\times n$ nonsingular
matrices.
Obviously, \eref{eq:invarcond} is affinely invariant, and also,
it is clear that for any instance $f$ of SDPS, $\icond{f}\le\cond{f}$.
Furthermore, if we are able to show that our affinely invariant algorithm
has running time bounded in terms of $\cond{f}$, then it will follow
automatically that it is also bounded in terms of $\icond{f}$.

The difficulty with \eref{eq:invarcond} is that there is no obvious
way to compute this quantity other than the exhaustive
method of trying out all choices of $A$
in a dense grid lying in $GL(n,\mathbb{R})$.  (If the matrix 2-norm is
used, then it suffices to try a dense sampling of upper triangular matrices,
a smaller search space,
since the $Q$ in a $QR$ factorization of $A$ does not affect the norm.)
Unless a better method
can be found,  
definition \eref{eq:invarcond}
would be useful in practice
mainly in cases where there is {\em a priori} information
about a linear transformation that improves the condition number.

\section{Performance of other algorithms for $n=3$ case on well conditioned instances}
\label{section_other}

Recall that SSI is a special case of SDPS with $n=3$.  Due to the 
abundance of SSI algorithms in literature, we compare our algorithm to well-known SSI algorithms.  Many previously published SSI algorithms work
well in practice and are widely used in computer-aided geometric design software.  Nonetheless, we
suspect that most of these algorithms can behave nonrobustly in the sense
that, given a well conditioned problem instance, they can sometimes internally
generate an arbitrarily ill conditioned subproblem that they then
must solve.  If an algorithm
is capable of this behavior, then it is not possible to bound
its running time in terms of the instance's condition number as we shall do
for our algorithm.
Indeed, as far as we know, there is no \emph{a priori} running time upper bound for any SSI algorithm in the literature.
In this section, we consider how two well-known SSI
algorithms can generate bad subproblems given good problem instances.

\subsection{Marching methods based on collinear normal points}
\label{section_sederberg}
Collinear normal points are points on the two surfaces whose normals
are collinear.  Marching methods based on collinear normal points
split the parametric domains in at least one direction at these
collinear normal points.  The consequence is all solution curves
have one point on the boundaries of the resulting subdomains provided
that the dot product of any two normal vectors of either surfaces is
never zero.  These points are located by a curve/surface intersection
algorithm and used as starting points for the marching step.

Consider applying collinear normal points marching methods to find
intersections between the two B\'{e}zier surfaces $p$ and $q$ whose
control points are defined in Table \ref{table_sed1} and Table
\ref{table_sed2}, respectively.  This problem is equivalent to solving SDPS with control points
\[
b_{i_1,i_2,i_3,i_4} = a_{i_1,i_2} - a'_{i_3,i_4}, 
\]
where $a_{i_1,i_2}$'s $(i_j = 0, \ldots, m_j)$ are control points of $p$ and $a'_{i_3,i_4}$'s $(i_j = 0, \ldots, m_j)$ are control points of $q$.
The condition number of this instance
is $423.4$, which is reasonably well-conditioned.  The instance
is also intuitively well-conditioned as there are neither complicated
nor almost singular intersections.  The two surfaces and their
intersection in object space are shown in Figure \ref{fig_sed}.

\begin{table}
\begin{center}
{\scriptsize{
\begin{tabular}{|c|c|c|c|}
\hline 
$a_{i_1,i_2}$ & $0$ &$1$&$2$\\
\hline
$0$&$(.155,.055,.002524)^T$&$(.155,.555,.003592)^T$ & $(.155,1.055,-.008142)^T$ \\
$1$&$(.655,.055,.005414)^T$ &$(.655,.555,-.01454)^T$&$(.655,1.055,.005146)^T$ \\
$2$&$(1.155,.055,-.01745)^T$ &$(1.155,.555,-.02108)^T$& $(1.155,1.055,.01718)^T$   \\
\hline
\end{tabular}
}}\end{center}
\caption{The control points of B\'{e}zier surfaces $p$.  The entry in the $i_1$th row and the $i_2$th column is the control point $a_{i_1,i_2}$ of $p$.\label{table_sed1}}
\end{table}
\begin{table}
\begin{center}
{\scriptsize{
\begin{tabular}{|c|c|c|c|c|}
\hline 
$a'_{i_3,i_4}$ & $0$ &$1$&$2$&$3$\\
\hline
$0$&$(.1768,.1295,.02303)^T$&$(.1767,.3465,.06306)^T$&$(.1767,.6467,-.0801)^T$&$(.1768,1.437,-.01946)^T$\\
$1$&$(.4081,.1287,.04006)^T$&$(.4081,.3434,.0948)^T$&$(.4081,.6384,-.08438)^T$&$(.4081,1.444,.004442)^T$\\
$2$&$(.7515,.1249,-.07777)^T$&$(.7515,.3294,-.1071)^T$&$(.7515,.6008,.05544)^T$&$(.7515,1.477,-.01756)^T$\\
$3$&$(1.6068,.1078,-.004634)^T$&$(1.6068,.2651,.01833)^T$&$(1.6068,.4288,-.01627)^T$&$(1.6069,1.628,.2468)^T$\\
\hline
\end{tabular}
}}
\end{center}
\caption{The control points of B\'{e}zier surfaces $q$.  The entry in the $i_3$th row and the $i_4$th column is the control point $a'_{i_3,i_4}$ of $q$.\label{table_sed2}}
\end{table}

\begin{figure}
\centering
\includegraphics[width=.9\textwidth]{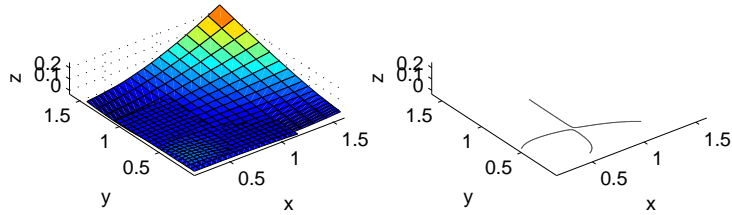}
\caption{An example of a well-conditioned instance of SSI 
problem where collinear normal point methods and Koparkar's algorithm 
require long computation time to solve.}
\label{fig_sed}
\end{figure}

Figure \ref{fig_sed_para} shows a pair of collinear normal points in
parametric space and the subdivision of domain at these points, as
well as the intersection between the surfaces (Other pairs of
collinear normal points are not shown).  Parts of the intersection
curves lie almost parallel to the boundaries of the created
subdomains.  In other words, the cuts made by the algorithm happen to
intersect degenerately or nearly degenerately with the problem data.
Thus, after the algorithm has made a subdivision of the domain
based on the collinear normals,
it must now recursively solve
arbitrarily ill conditioned subproblems since the
intersection point between the surface and the boundary curve is
almost singular. The running time of algorithms for finding these
intersection points typically depends on their conditioning
(see, e.g., the RIA algorithm of \cite{patrikalakis}).
Thus, the running time of the algorithm cannot be
bounded in terms of the condition number of the input.  
Furthermore, ill-conditioning of the
subproblems may cause unexpected inaccuracy of the solution of the
original well-conditioned instance.

\begin{figure}
\centering
\includegraphics[width=.9\textwidth]{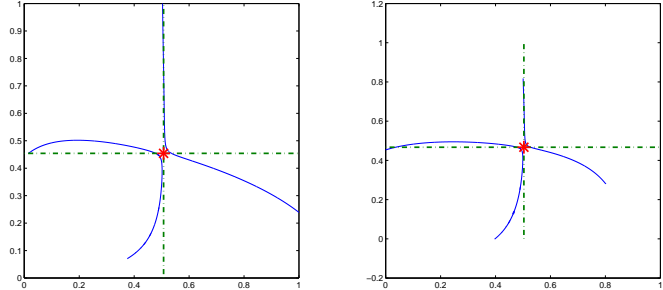}
\caption{Intersection of the surfaces in Figure \ref{fig_sed} in parametric space and splitting of the domains at a pair of collinear normal points showing that marching methods based on collinear normal points create ill-conditioned subproblem instances from a well-conditioned original instance.  Solid lines are the intersections. Asterisks are the pair of collinear normal points.  Dotted lines are the splitting lines at those points. To the left is the parametric domain of $p$.  To the right is that of $q$.}
\label{fig_sed_para}
\end{figure}

\subsection{Koparkar's algorithm}
\label{section_koparkar_cond}

Koparkar's algorithm uses a test based on the contraction mapping theorem
to determine if, in a given domain, a Newton-like method converges or
the two surfaces do not intersect at all.  If the Newton-like method
is guaranteed to converge, the method is used to find part of the
solution curves inside the domain.  If the two surfaces are known 
not to intersect, the domain is discarded.  Otherwise, each surface is
subdivided by splitting their parametric domains into four, 
and the test is repeated on the created subdomains.
This process continues until the entire domain is examined.

The test in Koparkar's algorithm requires the ability to evaluate
ranges of functions, which is typically accomplished by variety of
interval arithmetics.  None of these techniques can give the exact
ranges, however; they yield supersets of the ranges.  For this reason,
the convergence test is likely to fail when part of the
solution lies very close or directly on the border of a subcube in
both $x_1 x_2$-space and $x_3 x_4$-space at the same time, which is
not necessarily on or near the border of the original domain
$[0,1]^4$.  The same problem instance discussed in Section
\ref{section_sederberg}, which is shown in Figure \ref{fig_sed}, has
such problem.  The domain $[0,1]^4$ does not pass the test, and the
domain is subdivided at the midpoints as shown in Figure
\ref{fig_koparkar}.  The subdomains now have solutions directly on a
boundary.  Koparkar's algorithm needs to subdivide these subdomains to
very small ones before the convergence test is satisfied.  This
example demonstrates that Koparkar's algorithm is inefficient at
solving certain well-conditioned instances.  The algorithm may solve
other instances with higher condition numbers but without any parts of
the intersections near any boundaries of the subdivided subdomains
faster than this well-conditioned instance.

\begin{figure}
\centering
\includegraphics[width=.9\textwidth]{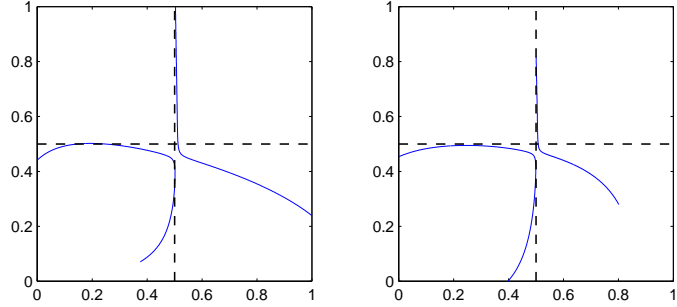}
\caption{Intersection of the surfaces in Figure \ref{fig_sed} in parametric space and subdomains created after the original domain $[0,1]^4$ fails Koparkar's test.  Due to part of the intersections lying on a boundary of a subdomain, Koparkar's algorithm requires many subdivisions before those parts of the intersections can be located.  Solid lines are the intersections.  Dashed lines are the subdivision lines. To the left is the parametric domain of $p$.  To the right is that of $q$.}
\label{fig_koparkar}
\end{figure}

\section{The Kantorovich-Test Subdivision algorithm}
\label{sec:algo}

This section describes our algorithm for the SDPS problem. 
Some details are postponed to the next section.
Since we are interested in solutions of $f$ within the hypercube $[0,1]^{n+1}$, and the closed ball $\bar B(x,r)$ defined in the infinity norm is a hypercube, our algorithm uses the infinity norm for all of its norm computation.  Therefore, for
the rest of this article, the notation $\norm{\cdot}$ is used to
refer specifically to infinity norm.

During the computation, our algorithm maintains a list of
\emph{explored regions} defined as parts of the domain $[0,1]^{n+1}$
guaranteed by Kantorovich's Theorem to contain only
the solutions that have already been found.  This list is used
in addition to another test to determine whether to subdivide a
hypercube.  We define the \emph{Kantorovich test} on a hypercube
$X = \bar{B}(x^0,r)$ as the application of Kantorovich's Theorem
on the point $x^0$ to the function $h^{(ik)}(x) = \left(f(x),x_i-k
\right)^T$ for each $i=1,2,\ldots, n+1$ and any $k \in [x_i^0-r,
x_i^0+r]$.  The hypercube $[-.5,1.5]^{n+1}$ is used as the domain
$D$ in the statement of the theorem, and
$\norm{\left[\left(h^{(ik)}\right)'(x^0)\right]^{-1}h^{(ik)}(x^0)}$
is used as $\eta$. For $\omega$, we instead use $\hat{\omega} \geq
\omega$, where $\hat{\omega}$ is defined by (\ref{computedlip})
below, as the minimal $\omega$ is too expensive to compute. The
hypercube $X$ passes the Kantorovich test if there exists an $i
\in \{1,2, \ldots, n+1 \}$ such that for every $k \in [x_i^0-r, x_i^0+r]$,
$\eta \hat{\omega} \leq 1/4$ and $\bar{B}(x^0,\rho_-) \subseteq
D$.

The choice of $D$ mentioned in the previous paragraph is used in
the analysis of the algorithm in Section~\ref{section_analysis}, but
in practice a smaller $D$ such that $B(x^0,r)\subset D\subset [-.5,1.5]^{n+1}$
may be used.  The advantage of using a smaller $D$ is that the Lipschitz
constant $\hat\omega$ will be smaller, so the 
inequality $\eta\hat\omega\le 1/4$
may be satisfied
more easily (i.e., for larger hypercubes) than using the full $D$.
The disadvantage, however, is that if $D$ is chosen too small,
then
the condition $B(x^0,\rho^-)\subset D$ of the theorem may be hard
to satisfy. In particular, the choice $D=B(x^0,r)$ would be
unacceptable for this reason.

If $X$ passes the Kantorovich test, then three important consequences
follow.  First, $x^0$ is a fast starting point for
$h^{(ik)}$ for the particular $i$ that satisfies the condition of
the Kantorovich test and any $k \in [x_i^0-r, x_i^0+r]$. Second,
the segment of the solution curve of $f$ that contains the root
guaranteed by the conclusion of Kantorovich's theorem is not a loop in $x_1 x_2 \cdots x_{n+1}$-space inside $X$ (although it may be part of a loop in the
original domain).  Third, an explored region for this segment of the
solution curve can be derived. The explored region is
\begin{equation}
\label{explored_region} X_E = \left\{x : x_i^0-r \leq x_i \leq
x_i^0+r\right\} \cap D \cap \bigcap_{k \in [x_i^0-r, x_i^0+r]}
\left( \bar{B}(x^0,\rho_-^{(k)}) \cup B(x^0,\rho_+^{(k)}) \right),
\end{equation}
where $\rho_-^{(k)}$ and $\rho_+^{(k)}$ are $\rho_-$ and $\rho_+$
in the statement of Kantorovich's theorem with respect to
$h^{(ik)}$.  Observe that $X_E$ is a hyperrectangle in
$\mathbb{R}^{n+1}$ and can be stored and computed succinctly as
detailed in Section \ref{section_impkan}.  Note also that the
explored region provides an effective way to prevent the points on
different but nearby solution curves from being incorrectly
joined into the same curve.

The other test our algorithm uses is the exclusion test. For a
given hypercube $X$, let $\hat{f}_X$ be the Bernstein polynomial
that reparametrizes with $[0,1]^{n+1}$ the function defined by $f$ over
$X$.  In other words, $\hat{f}_X(q)\equiv f(\lambda(q))$, where
$\lambda(q)$ is a composition of a dilation and translation
(uniquely determined) such that $\lambda:[0,1]^{n+1}\rightarrow X$ is
bijective.  (See Section~\ref{sec:reparam} for information how to
efficiently compute $\hat {f}_X$.)
The hypercube $X$ passes the \emph{exclusion test} if the
convex hull of the control points of $\hat{f}_X$ excludes the
origin.  It is a well-known property of the Bernstein-B\'ezier representation
of a polynomial $f$ that $f(U)$
lies in the convex hull of its control
points, where $U$ represents its natural parametric domain.
Thus, if the hull of the coefficients of a polynomial system
excludes the origin, this system has no solutions in the parametric domain.
We can check whether the hull excludes the origin by solving a low-dimensional
linear programming problem.  Megiddo \cite{Megiddo} showed that low-dimensional linear
programming problems can be solved in linear time (i.e., linear in the number
of control points), although we have not used Megiddo's algorithm.
Note that some other polynomial bases also have a similar exclusion test;
refer to \cite{srijuntongsiri_basis}.

We now proceed to describe our algorithm, the
\emph{Kantorovich-Test Subdivision algorithm} or KTS in short.

\begin{flushleft}
\textbf{Algorithm KTS}:
\end{flushleft}
\begin{itemize}
\item Let $Q$ be a queue with $[0,1]^{n+1}$ as its only entry. Set $S
= \emptyset$. \item Repeat until $Q = \emptyset$
\begin{enumerate}
\item Let $X$ be the hypercube at the front of $Q$.  Remove $X$
from $Q$. \item If $X \not\subseteq X_{E'}$ for all $X_{E'} \in
S$,
\begin{itemize}
\item Perform the exclusion test on $X=\bar{B}(x^0,r)$
\item If $X$ fails the exclusion test,
\begin{enumerate}
\item Perform the Kantorovich test on $X$
\item If $X$ passes the Kantorovich test,
\begin{enumerate}
\item Perform Newton's method on $h^{(ik)}$, where $i$ is the
index that satisfies the condition of the Kantorovich test and $k
= x_i^0-r$, starting from $x^0$ to find a zero $x^*$.

\item Trace the segment of the solution curve using $x^*$ as
the starting point and going toward $x_i^0+r$ direction until the
$x_i = x_i^0+r$ boundary is reached.

\item If the newly found segment is contained in any $X_{E'} \in
S$ (i.e. the segment has been found before), discard the segment.

\item Otherwise, compute the new explored region $X_E$ according
to (\ref{explored_region}).  Set $S = S \cup \{X_E\}$.
\end{enumerate}
\item If either $X$ fails the Kantorovich test or $X$ passes the
test with $X \not\subseteq X_E$, subdivide $X$ along all $n+1$
parametric axes into $2^{n+1}$ equal smaller hypercubes. Add these hypercubes to
the end of $Q$.
\end{enumerate}
\end{itemize}
\end{enumerate}
\item Check if any two segments of solution curves overlap. If
so, remove the overlapping part from one of the segments.

\item Join any two segments sharing an endpoint into one
continuous curve. Repeat until there are no two curves sharing an
endpoint.
\end{itemize}
A few remarks are needed regarding the description of the KTS
algorithm.
\begin{itemize}
\item The subdivision in step 2.c is performed in the case that
$X$ passes the Kantorovich test but $X \not\subseteq X_E$ because,
in general, passing the Kantorovich test does not imply that there
is only one solution curve in $X$.

\item The check in step 2.b.iii is necessary since the segment detected
by the Kantorovich test may be outside of $X$.

\item For the same reason as above, certain parts of a solution curve may be traced twice and hence must be removed
from one of the segments before the segments are joined.  The
overlapping segments can be detected by checking if an endpoint of
a segment is inside an explored region of another segment.  The
segments sharing an endpoint can also be detected from explored
regions in a similar manner.  Note that there is no ambiguity in this
step because an explored region, having passed the Kantorovich test,
cannot contain more than one connected component of the solution.

\item If the Kantorovich test is not applicable for a certain
hypercube due to the Jacobian of the midpoint being singular, the
hypercube is treated as if it fails the Kantorovich test and is then subdivided by step 2.c.
\end{itemize}

One property of KTS is that it is affinely invariant.  In other
words, left-multiplying $f$ with an $n \times n$ matrix $A$ prior to
executing KTS does not change its behavior. This is the main
reason that we introduced $\icond{f}$ earlier.

%

\section{Implementation details}
\label{sec:impdetails}

The implementations of certain steps of KTS are not apparent and thus
are explained in detail in this section.  While this
section focuses only on the B\'{e}zier surface case,
all the results herein can be generalized to certain other polynomial
bases as mentioned in the introduction.

\subsection{Computation of Lipschitz constant}
\label{section_implip}

For simplicity, denote $h^{(ik)}$ as $h$ when the choice of $(ik)$
is clear from context.  The Lipschitz constant for $h'(x^0)^{-1}h'
\equiv g$, which is required for the Kantorovich test, is obtained
from an upper bound over all $x \in [-.5, 1.5]^{n+1}$ of the derivative of $g$
\[
g'(x) =
\left(\frac{\partial^2\left(h'(x^0)^{-1}h\right)_i(x)}{\partial
x_j
\partial x_k} \right),
\]
where $\left(h'(x^0)^{-1}h\right)_i(x)$ denotes the $i$th entry of
$\left(h'(x^0)^{-1}h\right)(x)$. Let $\hat{g}$ be
the Bernstein polynomial that reparametrizes with $[0,1]^{n+1}$ the
surface defined by $g$ over $[-.5, 1.5]^{n+1}$ (Refer to Section~\ref{sec:reparam}).
We have
\begin{eqnarray}
\max_{x \in [-.5, 1.5]^{n+1}} \norm{g'(x)} & = & \frac{1}{2} \cdot \max_{x \in [0,1]^{n+1}} \norm{\hat{g}'(x)} \nonumber \\
& = & \frac{1}{2} \cdot \max_{x \in [0,1]^{n+1}} \max_{\norm{y}=1} \norm{\hat{g}'(x)y} 
\nonumber \\
                            & \leq &\frac{1}{2} \cdot \max_{x \in [0,1]^{n+1}}
 \max_i \sum_{j=1}^{n+1}\sum_{k=1}^{n+1} |\hat{g}'_{ijk}(x)| 
\nonumber \\
                            & \leq & \frac{(n+1)^2}{2} \max_{i,j,k} \max_{x \in [0,1]^{n+1}} | \hat{g}'_{ijk}(x) | .\nonumber
\end{eqnarray}
Note that each entry of $\hat{g}'$ can be written as a Bernstein
polynomial efficiently because
\begin{equation}
\frac{d Z_{i,m}(t)}{dt} = m\left(
Z_{i-1,m-1}(t)-Z_{i,m-1}(t)\right),
\label{eq:bezderiv1}
\end{equation}
where $Z_{-1,m-1}(t)=Z_{m,m-1}(t)=0$, which can be used to compute
the control points of the derivatives in Bernstein basis from a
given Bernstein polynomial directly. Hence, the maximum absolute
value of the control points of $\hat{g}'_{ijk}$ when written in
Bernstein basis is an upper bound of $\max_{x \in [0,1]^{n+1}} |
\hat{g}'_{ijk}(x)|$. Let $\hat{\omega}$ denote the Lipschitz
constant computed in this manner, that is,
\begin{equation}
\label{computedlip} \hat{\omega} \equiv \frac{(n+1)^2}{2} \max_{i,j,k,i_1,i_2,\ldots,i_{n+1}} 
\Vert\hat{g}'_{ijk,i_1,i_2,\ldots,i_{n+1}}\Vert
,
\end{equation}
$\hat{g}'_{ijk,i_1,i_2,\ldots,i_{n+1}}$ are the control points
of $\hat g'_{ijk}$.

\subsection{The Kantorovich test and solution curve tracing}
\label{section_impkan}

Recall that for a hypercube $X$ to pass the Kantorovich test,
there must exist an $i \in \{1,2,\ldots,n+1\}$ satisfying $\eta \hat
\omega \leq 1/4$ and $\bar{B}(x^0,\rho_-) \subseteq D$ for
\emph{all} functions $h^{(ik)}$'s where $k \in [x^0_i-r,
x^0_i+r]$. The algorithm, however, need not explicitly check the
conditions for all values of $k$.  Notice that $\hat \omega$ and
$D$ are independent of $k$ and $\rho_-$ is an increasing function
of $\eta$.  For these reasons, KTS only needs to check the
conditions for the value of $k$ that maximizes $\eta$. Similarly,
the explored region $X_E$ can be computed solely from the
maximizer $k$.  But note also that $\eta$ is linear in $k$, which
means that the value of $k$ maximizing $\eta$ is either $x^0_i-r$
or $x^0_i+r$.

After a hypercube passes the Kantorovich test, the segment of the
solution curve detected by the test must be traced.  Since
the Kantorovich test guarantees that performing Newton's
method on $h^{(ik)}$ starting on $x^0$ converges for any $k \in
[x^0_i-r, x^0_i+r]$, we can trace the segment by
repeating Newton's method starting on $x^0$ for many different
values of $k$ to locate the points on the segment of the
solution curve.  Alternatively, we can perform Newton's method on $h^{(i,x^0_i-r)}$ starting on $x^0$ to find a point $x^1$ on the segment, use $x^1$ as the starting point for Newton's method on $h^{(i,x^0_i-r+\epsilon)}$, $\epsilon > 0$,
to find the next point $x^2$ on the segment, use $x^2$ as the starting point on 
$h^{(i,x^0_i-r+2\epsilon)}$ to find the next point on the segment, and so on.

\subsection{Reparametrization}
\label{sec:reparam}

There are two steps of KTS involving reparametrization of
polynomials in Bernstein basis, namely the exclusion test and the
computation of the Lipschitz constant for the Kantorovich test.
``Reparametrization'' in this context means the computation of new
control points that describe the same function with respect
to the new parameter domain.
Both steps require the reparametrization with $[0,1]^{n+1}$ of
Bernstein polynomials with $n+1$ variables, which is a
straightforward extension of reparametrization with $[0,1]^2$ of
bivariate Bernstein polynomials.  An example of efficient 
algorithms for reparametrizing bivariate Bernstein polynomials with $[0,1]^2$ is discussed in \cite{srijuntongsiri_basis}.  Alternatively, the polynomials can be reparametrized by two applications of the \emph{de Casteljau algorithm} \cite{patrikalakis}; one to reparametrize the right endpoints with $1$ and another to reparametrize the left endpoints with $0$.

\section{Time complexity analysis}
\label{section_analysis}

In this section, we prove a number of theorems leading to the
theorem regarding the running time of the KTS algorithm. Since
both the exclusion test and the computation of the Lipschitz
constant in the Kantorovich test use the control points in their
computations, it is useful to find the relationship between the
control points and the function values of the polynomial defined
by them. Recall that $M$ is defined as the maximum norm among
control points of $f$ and was denoted $\Vert f\Vert$ earlier.

We have already shown that $f(x)$ and $f'(x)$ are bounded
by $M$ in 
\eref{eq:fMbound} and \eref{eq:fpMbound}.
Using the same logic, we can derive
a Lipschitz bound on $f'(x)$, i.e., an upper bound on
$f''$, as follows:
\begin{equation}
\Vert f'(x)-f'(y)\Vert \le 4(n+1)\max(m_1,\ldots,m_{n+1})^2M\Vert x-y\Vert.
\label{eq:fLipMbound}
\end{equation}

The use of the Kantorovich theorem requires a Lipschitz bound
for a slightly larger region.  If we require a Lipschitz bound for
$f'(x)$ over $[-\eps,1+\eps]^{n+1}$, then we can argue based on the
deCasteljau algorithm for evaluating B\'ezier polynomials that
\begin{equation}
  \Vert f'(x)-f'(y)\Vert \le 4(n+1)(1+\eps)^{\max(m_1,\ldots,m_{n+1})}\max(m_1,\ldots,m_{n+1})^2M\Vert x-y\Vert.
\label{eq:exLip}
\end{equation}

Another useful inequality is that for any $x\in[0,1]^{n+1}$,
\begin{equation}
M\Vert \mpinv{f'(x)}\Vert \ge 1/\left(2(n+1)\max(m_1, \ldots, m_{n+1})\right).
\label{eq:Mpinvineq}
\end{equation}
Equation \eref{eq:Mpinvineq} can be established
by multiplying both sides of \eref{eq:fpMbound}
by  $\Vert \mpinv{f'(x)}\Vert$ and then using
the fact that
$\Vert f'(x)\Vert\cdot \Vert \mpinv{f'(x)}\Vert\ge
\Vert  f'(x)\mpinv{f'(x)}\Vert =\Vert I\Vert=1$ for any
matrix $p$-norm, and for the infinity-norm in particular.

Now, we establish a bound that reverses \eref{eq:fMbound},
namely,
\begin{equation}
\label{theta_eq} M \leq \theta \max_{0 \leq
x_1,x_2,\ldots,x_{n+1} \leq 1} \norm{f(x)}
\end{equation}
for any polynomial $f$, where $\theta$ is as
defined as follows.
\begin{eqnarray}
\label{theta_def} \theta = \theta(m_1,\ldots, m_{n+1}) & = &
\prod_{k=1}^{n+1}\left(\sum_{i=0}^{m_k} \prod_{i' \neq i} \frac{\max\{|m_k-i'|,|i'|
\}}{|i-i'|} \right) \nonumber \\
&  = & O\left(m_1^{m_1+1}m_2^{m_2+1}\cdots m_{n+1}^{m_{n+1}+1}\right). \nonumber
\end{eqnarray}
This value of $\theta$  is specific to the choice of the
Bernstein-B\'ezier
basis; see \cite{srijuntongsiri_basis} for a discussion of other bases.

Our proof of \eref{theta_eq} is based on establishing a similar
result for univariate polynomials as shown by the following
lemmas.
\begin{theorem}[Srijuntongsiri and Vavasis \cite{srijuntongsiri_basis}]
\label{bibound} Let $f(t)$ be a polynomial system
\[
\begin{array}{llll}
f(t) &=& \sum_{i=0}^m b_{i} Z_{i,m}(t), & 0 \leq t \leq 1,
\end{array}
\]
where $b_{i} \in \mathbb{R}^d$. The norm of the coefficients can
be bounded by
\[
\norm{b_{i}} \leq  \xi_B(m) \max_{t : 0 \leq t \leq 1}
\norm{f(t)},
\]
where
\[
\xi_B(m) = \sum_{i=0}^m \prod_{j=0,1,\ldots,i-1,i+1,\ldots,m}
\frac{\max\{|m-j|,|j|\}}{|i-j|} = O(m^{m+1}).
\]
\end{theorem}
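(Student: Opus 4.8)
The plan is to express each B\'ezier coefficient $b_i$ as an explicit linear combination, with \emph{scalar} weights, of the values of $f$ at the $m+1$ equally spaced nodes $t_k=k/m\in[0,1]$, and then bound the sum of the absolute values of those weights by $\xi_B(m)$. Since $f$ has degree at most $m$ in $t$, Lagrange interpolation at $t_0,\dots,t_m$ is exact: $f(t)=\sum_{k=0}^{m}f(k/m)\,\ell_k(t)$, where $\ell_k(t)=\prod_{j\ne k}\frac{mt-j}{k-j}$ is the $k$th fundamental polynomial. Each $\ell_k$ has degree at most $m$, hence a unique Bernstein--B\'ezier representation $\ell_k(t)=\sum_{i=0}^{m}\beta_{ik}Z_{i,m}(t)$ with scalar coefficients $\beta_{ik}$, and by uniqueness of the Bernstein form $b_i=\sum_{k=0}^{m}\beta_{ik}\,f(k/m)$. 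Therefore, for any norm on $\R^d$,
\[
\norm{b_i}\ \le\ \left(\sum_{k=0}^{m}|\beta_{ik}|\right)\max_{0\le k\le m}\norm{f(k/m)}\ \le\ \left(\sum_{k=0}^{m}|\beta_{ik}|\right)\max_{0\le t\le 1}\norm{f(t)},
\]
so it suffices to prove $\sum_{k=0}^{m}|\beta_{ik}|\le\xi_B(m)$ for every $i$.

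The crux is the claim that \emph{each} Bernstein coefficient of $\ell_k$ is bounded by the $k$th summand of $\xi_B(m)$, namely
\[
|\beta_{ik}|\ \le\ \prod_{j=0,\,j\ne k}^{m}\frac{\max\{|m-j|,|j|\}}{|k-j|}\qquad(0\le i\le m),
\]
a bound independent of $i$. To get this I would use the polar form (blossom): $\beta_{ik}$ equals the blossom of $\ell_k$ evaluated at the argument consisting of $m-i$ zeros and $i$ ones. Writing $\ell_k(t)=c_k\prod_{j\ne k}(t-j/m)$ with $c_k=\bigl(\prod_{j\ne k}(k/m-j/m)\bigr)^{-1}$, so that $|c_k|=m^{m}/(k!(m-k)!)$, and using that the blossom of a monic product of linear factors is the symmetrized product of those factors, evaluation at an argument all of whose entries are $0$ or $1$ makes the root $j/m$ contribute either the factor $-j/m$ or the factor $1-j/m=(m-j)/m$, each of absolute value at most $\max\{j,m-j\}/m$. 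The symmetrization produces $\binom{m}{i}$ such terms weighted by $\frac{(m-i)!\,i!}{m!}$, and $\binom{m}{i}\cdot\frac{(m-i)!\,i!}{m!}=1$; multiplying by $|c_k|$ cancels the $m^{m}$ factors, leaving exactly $\prod_{j\ne k}\max\{j,m-j\}/(k!(m-k)!)=\prod_{j\ne k}\frac{\max\{|m-j|,|j|\}}{|k-j|}$. Summing over $k$ then gives $\sum_{k}|\beta_{ik}|\le\sum_{k=0}^{m}\prod_{j\ne k}\frac{\max\{|m-j|,|j|\}}{|k-j|}=\xi_B(m)$, which is the desired inequality; the estimate $\xi_B(m)=O(m^{m+1})$ follows crudely, since each of the $m$ factors $\max\{|m-j|,|j|\}$ is at most $m$, so every summand is at most $m^{m}$ and there are $m+1$ of them.

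The step I expect to be the main obstacle is this per-coefficient bound on $\beta_{ik}$: one must track signs carefully and match the combinatorial factor from the symmetrization against the normalizing constant $c_k$ so that everything collapses to a product independent of $i$. If one wishes to avoid blossoms (which this paper has not introduced), the same estimate can be obtained more pedestrianly: expand $\ell_k$ in the monomial basis, where its $t^{\ell}$-coefficient is $\pm c_k$ times an elementary symmetric function of $\{j/m:j\ne k\}$ which is bounded termwise using $\max\{j,m-j\}$, and then pass to the Bernstein basis via the identity $t^{\ell}=\sum_{i\ge \ell}\frac{\binom{i}{\ell}}{\binom{m}{\ell}}Z_{i,m}(t)$. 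This route costs more calculation but needs no new ideas; everything else---exactness of Lagrange interpolation and uniqueness of the Bernstein form---is routine.
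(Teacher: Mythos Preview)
The paper does not prove this theorem: it is quoted from \cite{srijuntongsiri_basis} and used as a black box, so there is no in-paper argument to compare against.

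Your proposal is correct. The key identity $b_i=\sum_{k=0}^{m}\beta_{ik}f(k/m)$, obtained by combining exact Lagrange interpolation at the nodes $k/m$ with the unique Bernstein representation of each Lagrange basis polynomial $\ell_k$, reduces the problem to the scalar estimate $|\beta_{ik}|\le\prod_{j\ne k}\frac{\max\{|m-j|,|j|\}}{|k-j|}$. Your blossoming argument for this estimate checks out: with $\ell_k(t)=c_k\prod_{j\ne k}(t-j/m)$ and $|c_k|=m^{m}/(k!(m-k)!)$, the polar form evaluated at $(0^{\,m-i},1^{\,i})$ is a convex-type average of products in which each factor has modulus at most $\max\{j,m-j\}/m$; grouping the $m!$ permutations into $\binom{m}{i}$ classes of size $(m-i)!\,i!$ gives the coefficient $\binom{m}{i}\cdot\frac{(m-i)!\,i!}{m!}=1$, and the $m^{m}$ from $|c_k|$ cancels the $m^{-m}$ from the factors, leaving the stated product (using $\prod_{j\ne k}|k-j|=k!(m-k)!$). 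Summing over $k$ yields exactly $\xi_B(m)$, independent of $i$, and the $O(m^{m+1})$ bound is immediate since each of the $m+1$ summands has $m$ factors each at most $m$.

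The only cosmetic point is the index clash: in the statement the summation variable in $\xi_B(m)$ is called $i$, while in your argument $i$ is the Bernstein-coefficient index and the summation runs over the interpolation index $k$. This is harmless but worth renaming when you write it up. Your alternative route via the monomial expansion and the identity $t^{\ell}=\sum_{i\ge\ell}\binom{i}{\ell}\binom{m}{\ell}^{-1}Z_{i,m}(t)$ is also viable and avoids introducing polar forms.
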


\begin{lemma}
\label{theta_to_gen} 
Let $l$ and $h$ be constants satisfying $l
< h$.  Let $n$ be a given positive integer.  Suppose there exists a function $\xi(m)$ satisfying
\begin{equation}
\label{eq:bound_1d}
\norm{a_{i}} \leq \xi(m) \max_{l \leq t \leq h} \norm{g(t)}
\end{equation}
for any $a_i \in \mathbb{R}^d$ $(i = 0,1,\ldots,m)$ and any univariate polynomial
$g(t) = \sum_{i=0}^m a_i \phi_i(t)$, where $\phi_i(t)$ denotes the
polynomial basis.  Suppose also there exists a function $\zeta(m_1, m_2, \ldots, m_n)$ satisfying
\begin{equation}
\label{eq:bound_nd}
\norm{\bar a_{i_1,i_2,\ldots,i_n}} \leq \zeta(m_1, m_2, \ldots, m_n) \max_{l \leq x_1, x_2, \ldots, x_n \leq h} \norm{\bar g(x_1,x_2,\ldots,x_n)}
\end{equation}
for any $\bar a_{i_1,i_2,\ldots,i_n} \in \mathbb{R}^d$ $(i_j = 0,1,\ldots,m_j)$ and any polynomial in $n$ variables 
\[
\bar g(x_1,x_1,\ldots,x_n) = \sum_{i_1=0}^{m_1}\sum_{i_2=0}^{m_2}\cdots \sum_{i_n=0}^{m_n} \bar a_{i_1,i_2,\ldots,i_n} \phi_{i_1}(x_1) \phi_{i_2}(x_2)\cdots \phi_{i_n}(x_n).
\]
Then
\begin{equation}
\norm{b_{i_1,\ldots, i_{n+1}}} \leq \zeta(m_1, m_2, \ldots, m_n)\xi(m_{n+1}) \max_{l \leq x_1, x_2, \ldots, x_{n+1} \leq h}
\norm{f(x_1, x_2, \ldots, x_{n+1})}
\end{equation}
for any $b_{i_1,\ldots,i_{n+1}} \in \mathbb{R}^d$ $(i_j = 0,1,\ldots,m_j)$, where
$f$ is the 
polynomial in $n+1$ variables defined by 
\begin{equation}
\label{definef}
f(x_1,x_2,\ldots,x_{n+1}) = 
\sum_{i_1 = 0}^{m_1}\cdots \sum_{i_{n+1}=0}^{m_{n+1}} b_{i_1,  \cdots, i_{n+1}} \phi_{i_1}(x_1) \cdots \phi_{i_{n+1}}(x_{n+1}).
\end{equation}
\end{lemma}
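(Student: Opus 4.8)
The plan is to prove the bound on the $(n+1)$-variable control points by ``freezing'' the last variable and applying the one-variable estimate \eref{eq:bound_1d}, then sweeping over all values of the first $n$ variables and applying the $n$-variable estimate \eref{eq:bound_nd}. First I would fix a multi-index $(i_1,\ldots,i_n)$ for the first $n$ directions and consider, for each fixed value of $(x_1,\ldots,x_n)$, the univariate polynomial
\[
g_{x_1,\ldots,x_n}(t) = \sum_{i_{n+1}=0}^{m_{n+1}}
\left( \sum_{j_1=0}^{m_1}\cdots\sum_{j_n=0}^{m_n}
b_{j_1,\ldots,j_n,i_{n+1}}\,\phi_{j_1}(x_1)\cdots\phi_{j_n}(x_n)\right)
\phi_{i_{n+1}}(t),
\]
whose coefficient of $\phi_{i_{n+1}}$ is itself an $n$-variable polynomial in $(x_1,\ldots,x_n)$ evaluated at that point. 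Applying \eref{eq:bound_1d} to $g_{x_1,\ldots,x_n}$ bounds the norm of that coefficient vector by $\xi(m_{n+1})\max_{l\le t\le h}\norm{g_{x_1,\ldots,x_n}(t)}$, and the right-hand side is at most $\xi(m_{n+1})\max_{l\le x_1,\ldots,x_{n+1}\le h}\norm{f(x_1,\ldots,x_{n+1})}$ since every value of $g_{x_1,\ldots,x_n}$ is a value of $f$.

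Next I would observe that the coefficient vector just bounded is exactly the $n$-variable polynomial
\[
\bar g_{i_{n+1}}(x_1,\ldots,x_n) = \sum_{i_1=0}^{m_1}\cdots\sum_{i_n=0}^{m_n} b_{i_1,\ldots,i_n,i_{n+1}}\,\phi_{i_1}(x_1)\cdots\phi_{i_n}(x_n),
\]
so the previous step shows $\max_{l\le x_1,\ldots,x_n\le h}\norm{\bar g_{i_{n+1}}(x_1,\ldots,x_n)}\le \xi(m_{n+1})\max_{l\le x_1,\ldots,x_{n+1}\le h}\norm{f}$ for each fixed $i_{n+1}$. Now I would apply the $n$-variable hypothesis \eref{eq:bound_nd} to $\bar g_{i_{n+1}}$, whose control points are precisely $b_{i_1,\ldots,i_n,i_{n+1}}$, giving $\norm{b_{i_1,\ldots,i_n,i_{n+1}}}\le \zeta(m_1,\ldots,m_n)\max_{l\le x_1,\ldots,x_n\le h}\norm{\bar g_{i_{n+1}}}$. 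Chaining the two inequalities yields $\norm{b_{i_1,\ldots,i_{n+1}}}\le \zeta(m_1,\ldots,m_n)\xi(m_{n+1})\max_{l\le x_1,\ldots,x_{n+1}\le h}\norm{f}$, which is the claim.

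The only mild subtlety — and the step I would treat most carefully — is the bookkeeping that identifies the coefficient of $\phi_{i_{n+1}}(t)$ in the univariate slice $g_{x_1,\ldots,x_n}$ with the value at $(x_1,\ldots,x_n)$ of the $n$-variable polynomial $\bar g_{i_{n+1}}$; this is just a rearrangement of the tensor-product sum in \eref{definef}, but it is what makes the two hypotheses compose cleanly. One should also note that the argument uses nothing about the particular basis $\phi_i$ beyond the existence of the two bounding functions $\xi$ and $\zeta$, so the lemma is genuinely a reduction step; combined with Theorem~\ref{bibound} (which supplies $\xi = \xi_B$ in the Bernstein case) and an induction on the number of variables, it will give \eref{theta_eq} with $\theta = \prod_{k=1}^{n+1}\xi_B(m_k)$.
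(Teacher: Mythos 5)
Your proposal is correct: it uses only the two hypotheses and yields exactly the claimed bound with the constant $\zeta(m_1,\ldots,m_n)\xi(m_{n+1})$, but it composes them in the opposite order from the paper. The paper fixes a coefficient multi-index $I'=(i'_1,\ldots,i'_n)$ and first applies the univariate bound \eref{eq:bound_1d} to $g_{I'}(x_{n+1})=\sum_{i_{n+1}} b_{i'_1,\ldots,i'_n,i_{n+1}}\phi_{i_{n+1}}(x_{n+1})$, whose coefficients are the original control points; it then chooses a maximizer $x^*_{I'}$ of $\norm{g_{I'}}$ on $[l,h]$, notes that $g_{I'}(x^*_{I'})$ is one of the control points of the $n$-variable slice $k_{I'}(x_1,\ldots,x_n)=f(x_1,\ldots,x_n,x^*_{I'})$, and applies \eref{eq:bound_nd} to $k_{I'}$. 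You instead apply \eref{eq:bound_1d} to the function slices $f(x_1,\ldots,x_n,\cdot)$ for each frozen $(x_1,\ldots,x_n)\in[l,h]^n$, obtaining a uniform bound on the coefficient polynomials $\bar g_{i_{n+1}}$ over $[l,h]^n$, and only then apply \eref{eq:bound_nd} to $\bar g_{i_{n+1}}$, whose control points are the target $b_{i_1,\ldots,i_{n+1}}$. The two routes are dual rearrangements of the same tensor-product sum \eref{definef} and give the identical constant; yours dispenses with the maximizing point $x^*_{I'}$ and the auxiliary polynomial $k_{I'}$ by bounding uniformly, which is marginally cleaner, while the paper's order reaches the control points being estimated already in its first step. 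Your closing remark also matches the paper's use of the lemma: \eref{theta_eq} follows by induction on the number of variables with Theorem~\ref{bibound} supplying $\xi=\xi_B$, so that $\theta=\prod_{k=1}^{n+1}\xi_B(m_k)$.
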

\begin{proof}
Let $f(x_1,x_2,\ldots,x_{n+1})$ be an arbitrary polynomial defined as in (\ref{definef}).  For any $n$-tuple $I' = (i'_1, i'_2, \ldots, i'_n)$ $(i'_j = 0, 1, \ldots, m_j)$, define 
\[
g_{I'}(x_{n+1}) = \sum_{i_{n+1}=0}^{m_{n+1}} b_{i'_1, i'_2, \ldots, i'_n, i_{n+1}} \phi_{i_{n+1}}(x_{n+1}).
\]
Applying (\ref{eq:bound_1d}) to $g_{I'}$ yields
\begin{equation}
\label{eq:g1}
\norm{ b_{i'_1, i'_2, \ldots, i'_n, i_{n+1}}} \leq \xi(m_{n+1}) \max_{l \leq x_{n+1} \leq h} \norm { g_{I'}(x_{n+1})}
\end{equation}
for any $b_{i'_1, i'_2, \ldots, i'_n, i_{n+1}}$ $(i_{n+1}= 0, 1, \ldots, m_{n+1})$.  Let $x^*_{I'}$ be a point where $\max_{l \leq x_{n+1} \leq h} \norm { g_{I'}(x_{n+1})}$ is achieved.  Define 
\[
k_{I'}(x_1, x_2, \ldots, x_n) = \sum_{i_1=0}^{m_1} \sum_{i_2 = 0}^{m_2} \cdots \sum_{i_n=0}^{m_n}\sum_{i_{n+1}=0}^{m_{n+1}} b_{i_1, i_2, \ldots, i_{n+1}} \phi_{i_1}(x_1) \phi_{i_2}(x_2) \cdots \phi_{i_n}(x_n) \phi_{i_{n+1}}(x^*_{I'}).
\]
Applying (\ref{eq:bound_nd}) to $k_{I'}$ yields
\begin{equation}
\label{eq:g2}
\norm{\sum_{i_{n+1}=0}^{m_{n+1}} b_{i_1, \ldots, i_{n+1}} \phi_{i_{n+1}}(x^*_{I'})} \leq \zeta(m_1, m_2,\ldots,m_n) \max_{l \leq x_1, \ldots, x_n \leq h} \norm{ k_{I'}(x_1, \ldots, x_n)}
\end{equation}
for any $(i_1, i_2, \ldots, i_n)$ ($i_j = 0, 1, \ldots, m_j$).  Consequently, by combining (\ref{eq:g1}) and (\ref{eq:g2}),
\begin{eqnarray*}
\norm{ b_{i'_1, i'_2, \ldots, i'_n, i_{n+1}}} &\leq & \xi(m_{n+1}) \norm { g_{I'}(x^*_{I'})} \\
& = & \xi(m_{n+1})\norm{\sum_{i_{n+1}=0}^{m_{n+1}} b_{i_1, \ldots, i_{n+1}} \phi_{i_{n+1}}(x^*_{I'})} \\
& \leq & \xi(m_{n+1})\zeta(m_1, m_2,\ldots,m_n) \max_{l \leq x_1, \ldots, x_n \leq h} \norm{ k_{I'}(x_1, \ldots, x_n)} \\
& \leq & \xi(m_{n+1})\zeta(m_1, m_2, \ldots, m_n) \max_{l \leq x_1, x_2, \ldots, x_{n+1} \leq h}
\norm{f(x_1, x_2, \ldots, x_{n+1})}.
\end{eqnarray*}
\end{proof}

Hence, (\ref{theta_eq}) holds by induction using Theorem \ref{bibound} as the basis and Lemma \ref{theta_to_gen} as the inductive step.

Recall that the Lipschitz constant $\hat{\omega}$ given by
(\ref{computedlip}) is not the smallest Lipschitz constant for
$\inv{h'(x^0)}h$ over $D=[-.5,1.5]^{n+1}$. However, we
can show that $\hat{\omega} \leq \left((n+1)^2/2\right)\theta\omega$, where $\omega$
denotes the smallest Lipschitz constant for $\inv{h'(x^0)}h$ over
$D$. Since $\hat{\omega}$ is computed from the absolute values of
the control points of $\hat{g}'_{ijk}(x)$, by (\ref{theta_eq}),
\begin{eqnarray}
\hat{\omega} & \leq & \frac{(n+1)^2}{2} \theta \max_{i,j,k}\max_{x \in [0,1]^{n+1}} \left|\hat{g}'_{ijk}(x) \right| \nonumber \\
& = & \frac{(n+1)^2}{2}\theta \max_{i,j,k}\max_{x \in D} \left|g'_{ijk}(x) \right| 
 \nonumber \\
& \leq & \frac{(n+1)^2}{2}\theta \max_{x \in D} \norm{g'(x)} = \frac{(n+1)^2}{2}\theta\omega.
\label{hodef}
\end{eqnarray}
With this bound on $\hat{\omega}$, we can now analyze the behavior of the Kantorovich test.

The following is the main technical result of this article.
\begin{theorem}
\label{thm2} Let $f(x)=f(x_1,x_2,\ldots,x_{n+1})$ be a Bernstein polynomial system in $n$ dimensions such that $\cond{f}<\infty$.
Let $x^0$ be a point in $[0,1]^{n+1}$.
Let $r > 0$ be such that $\bar{B}(x^0,r) \subseteq [0,1]^{n+1}$. If
\begin{equation}
\label{deltahatcond}
r < \left(\frac{c_0}{\cond{f}}\right)^2,
\end{equation}
where
\begin{equation}
c_0=\frac{1}{64(n+1)^3\theta\cdot 1.5^{\max(m_1,m_2,\ldots,m_{n+1})}\max(m_1,m_2,\ldots,m_{n+1})^2},
\label{eq:k0def}
\end{equation}
then either
\begin{enumerate}
\item The hypercube $\bar{B}(x^0,r)$ passes the Kantorovich test
and the associated explored region $X_E$ contains $X$, or 
\item
The hypercube $\bar{B}(x^0,r)$ passes the exclusion test.
\end{enumerate}
\end{theorem}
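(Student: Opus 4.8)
The goal is a dichotomy: if the hypercube $X = \bar B(x^0,r)$ is small enough relative to $\cond{f}$, then either $X$ passes the exclusion test or it passes the Kantorovich test with $X \subseteq X_E$. The natural case split is on the value of $\cond{f}$ at the midpoint $x^0$, using the defining $\min$ in \eref{cond_def}. Either (a) $M/\Vert f(x^0)\Vert \ge \cond{f}$, so $\Vert f(x^0)\Vert \ge M/\cond{f}$ is ``large'', or (b) $\Vert \mpinv{f'(x^0)}\Vert \ge \cond{f}/M$ is ``large'', equivalently $\sigma_n(f'(x^0))$ is bounded below by something like $M\delta_n/(\sqrt n\,\cond{f})$ via \eref{eq:fromgvl}. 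I would handle these two cases separately and show case (a) forces the exclusion test to succeed, while case (b) forces the Kantorovich test to succeed with the containment $X \subseteq X_E$.

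\textbf{Case (a): exclusion test.} The reparametrized polynomial $\hat f_X$ has control points that are convex combinations… no — more carefully, the control points of $\hat f_X$ are themselves obtained by deCasteljau subdivision, so each lies within distance roughly $\Vert f' \Vert \cdot \diam(X) \le 2(n+1)\max(m_i)\,M\cdot 2r\cdot(\text{deCasteljau blow-up }1.5^{\max m_i})$ of $f(x^0)$, using \eref{eq:fpMbound} and \eref{eq:exLip}-type reasoning on $[-.5,1.5]^{n+1}$. If $\Vert f(x^0)\Vert \ge M/\cond{f}$ and every control point of $\hat f_X$ is within, say, $M/(2\cond{f})$ of $f(x^0)$, then $0$ is not in the convex hull of those control points (the whole hull sits in an open halfspace), so $X$ passes the exclusion test. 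This needs $r \lesssim 1/(\cond{f}\cdot(n+1)\max(m_i)\cdot 1.5^{\max m_i})$, which is implied by the stronger hypothesis \eref{deltahatcond} since the right side is quadratic in $1/\cond{f}$ and $\cond{f}\ge\text{const}$ by \eref{eq:Mpinvineq}.

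\textbf{Case (b): Kantorovich test.} Here I need to verify the two Kantorovich conditions, $\eta\hat\omega \le 1/4$ and $\bar B(x^0,\rho_-)\subseteq D = [-.5,1.5]^{n+1}$, for the augmented function $h^{(ik)}(x) = (f(x), x_i - k)^T$ for a suitable coordinate $i$ and all $k \in [x_i^0 - r, x_i^0+r]$, and then show the explored region $X_E$ contains $X$. Estimating $\eta = \Vert h'(x^0)^{-1} h(x^0)\Vert$: since $\sigma_n(f'(x^0))$ is bounded below, and since $f'(x^0)$ has an $(n+1)$-dimensional domain, I can choose $i$ so that the square matrix $(h^{(ik)})'(x^0) = \binom{f'(x^0)}{e_i^T}$ is well-conditioned — e.g., pick $e_i$ with a substantial component in the direction of the right singular vector of $f'(x^0)$ for $\sigma_n$; then $\Vert h'(x^0)^{-1}\Vert \lesssim \sqrt n/\sigma_n(f'(x^0)) \lesssim \sqrt n\,\cond{f}/(M\delta_n)$. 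Meanwhile $\Vert h(x^0)\Vert \le \max(\Vert f(x^0)\Vert, r) \le \max(M/\cond{f}, r)$; but in case (b) $\Vert f(x^0)\Vert$ could be as large as $M/\cond{f}$, so $\eta \lesssim \cond{f}/(M\delta_n)\cdot M/\cond{f} = 1/\delta_n$ — that's not small! So I must be more careful: if $\Vert f(x^0)\Vert$ is that large we are actually in case (a) territory and the exclusion test fires; the honest split is $\Vert f(x^0)\Vert \le M/(2\cond{f})$ vs.\ $>$, and in the Kantorovich case $\eta \lesssim r + M/(2\cond{f})$ times $\sqrt n\cond{f}/(M\delta_n)$, so $\eta = O(1/\cond{f}) + O(\text{const})$ — still the constant term. \emph{This is the crux and the main obstacle}: I expect the resolution is that $\eta \le \Vert h'(x^0)^{-1}\Vert\cdot\Vert h(x^0)\Vert$ must be bounded using that $\Vert f(x^0)\Vert/\sigma_n(f'(x^0)) \le \Vert f(x^0)\Vert\cdot\Vert\mpinv{f'(x^0)}\Vert\cdot\sqrt n$, and the $\min$ in the condition number means we cannot have \emph{both} factors large, so in fact $\eta\cdot(\text{something}) \lesssim 1$; getting the constant below $1/4$ forces $r$ small and explains the precise form of $c_0$ in \eref{eq:k0def}, including the $64$ and the cube $(n+1)^3$.

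\textbf{Finishing case (b).} Once $\eta$ is controlled (by choosing the right geometric argument so that $\eta \lesssim r\cdot\cond{f}/M \cdot \text{const}$, using that the ``constant'' part of $\Vert f(x^0)\Vert$ is absorbed — this is where the quadratic dependence $r < (c_0/\cond{f})^2$ enters), I get $\hat\omega \le ((n+1)^2/2)\theta\omega$ by \eref{hodef}, with $\omega$ bounded via \eref{eq:exLip} by $\Vert h'(x^0)^{-1}\Vert$ times $4(n+1)1.5^{\max m_i}\max(m_i)^2 M \lesssim \cond{f}\cdot\text{poly}/\delta_n$; then $\eta\hat\omega \lesssim r\cdot\cond{f}^2\cdot(\text{poly in }n, m, \theta)\le 1/4$ precisely when $r < (c_0/\cond{f})^2$ with $c_0$ as in \eref{eq:k0def}. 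The ball condition $\bar B(x^0,\rho_-)\subseteq[-.5,1.5]^{n+1}$ follows because $\rho_- \le 2\eta$ when $h\le 1/2$ and $\eta$ is small, while $x^0\in[0,1]^{n+1}$, so $\rho_- \le 1/2$ suffices. Finally, $X\subseteq X_E$: the explored region \eref{explored_region} is the slab $\{x_i^0 - r\le x_i\le x_i^0+r\}$ intersected with the union of balls $\bar B(x^0,\rho_-^{(k)})\cup B(x^0,\rho_+^{(k)})$; since $\rho_+^{(k)} = (1+\sqrt{1-2h})/\hat\omega \ge 1/\hat\omega$ and $\hat\omega$ can be bounded above independent of how small $r$ is only if\ldots\ actually $\rho_+ \ge 1/\hat\omega$ and $\hat\omega$ may be large, but $\rho_+$ large requires $\hat\omega$ small; here $\hat\omega \lesssim \cond{f}\cdot\text{poly}$ is bounded, so $\rho_+ \gtrsim 1/(\cond{f}\cdot\text{poly}) \gg \sqrt r \ge r$ — wait, need $\rho_+ \ge r\sqrt{n+1}$, the circumradius of $X$; since $r < (c_0/\cond{f})^2$ and $\rho_+ \ge \text{const}/(\cond{f}\cdot\text{poly})$, and $(c_0/\cond{f})^2 \ll c_0/\cond{f}$, this holds. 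Thus $X \subseteq \bar B(x^0,\rho_+)$ for every relevant $k$, giving $X \subseteq X_E$, completing the proof. The whole argument hinges on the quadratic scaling in \eref{deltahatcond}, which is exactly what lets both ``$\eta\hat\omega \le 1/4$'' (linear-times-linear in $r\cdot\cond{f}^2$) and the geometric containments hold simultaneously.
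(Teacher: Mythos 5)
Your high-level architecture (one branch forces the exclusion test, the other forces the Kantorovich test, with $c_0$ absorbing degree-dependent constants) matches the paper, but the step you yourself flag as ``the crux'' is precisely where the argument has a genuine gap, and your guessed resolution is not correct. Write $\phi=\Vert f(x^0)\Vert$ and $\pi=\Vert\mpinv{f'(x^0)}\Vert$. Any split of the form ``$\phi> M/(2\cond{f})$ gives exclusion, otherwise Kantorovich'' cannot work: in the Kantorovich branch you only get $\pi\phi\le 1/2$ (order one), while the only available bound on the computed Lipschitz constant is $\hat\omega\lesssim (n+1)^3\theta\,1.5^{\max m_i}\max(m_i)^2 M\pi$, so the provable bound on $\eta\hat\omega$ is of order $M\pi^2\phi+M\pi r$; the first term can be of order $\cond{f}$ (e.g.\ $\pi\sim\cond{f}/M$, $\phi\sim M/\cond{f}$) \emph{no matter how small $r$ is}, so shrinking $r$ cannot rescue the test. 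Your proposed fix---``the min in the condition number means we cannot have both factors large''---is false: the min only guarantees that at least one of $M/\phi$, $M\pi$ is at most $\cond{f}$, and it does nothing to prevent $\pi\phi$ from being order one while $M\pi$ is huge. The paper resolves this with an \emph{asymmetric, quadratically weighted} case split: Case 1 is $c_1M\pi\le\sqrt{M/\phi}$, i.e.\ $M\pi^2\phi\le 1/c_1^2$ with $c_1$ as in \eref{eq:k1def}, which is exactly what makes $\eta\hat\omega\lesssim M\pi(\pi\phi+r)\le \mathrm{const}\cdot(1/c_1^2+c_0)<1/4$; in that branch $\cond{f}\ge M\pi$ and only the \emph{linear} consequence $r\le c_0/\cond{f}\le c_0/(M\pi)$ of \eref{deltahatcond} is used. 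Case 2 is the complementary inequality, which forces $\cond{f}\ge\sqrt{M/\phi}/c_1$, and it is \emph{there} that the quadratic hypothesis \eref{deltahatcond} is essential: it gives $r\le c_0^2c_1^2\phi/M$, hence $f$ varies over $X$ by less than $\phi/(2\theta)$ and the exclusion test passes via \eref{theta_eq}. You misplace the role of the exponent $2$, claiming it enters in the Kantorovich branch; without the correct split and the correct use of the quadratic bound in the exclusion branch, neither branch of your proof closes.

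Two smaller problems. In the Kantorovich branch the index $i$ must be chosen from the null vector $v$ of $f'(x^0)$ (the paper takes $i$ with $|v_i|=\Vert v\Vert_\infty=1$), so that $e_i$ supplies the direction missing from the row space and $\Vert h'(x^0)^{-1}\Vert$ is controlled through the explicit inverse formula; your suggestion to align $e_i$ with the right singular vector belonging to $\sigma_n$ points at a vector in the row space and does not give this control. In the exclusion branch, a bound on how much $f$ varies over $X$ does not by itself bound the distance of the \emph{control points} of $\hat f_X$ from $f(x^0)$; you need the reverse inequality \eref{theta_eq} (the $\theta$ factor, which is why $\theta$ appears in $c_0$), whereas the $1.5^{\max m_i}$ deCasteljau factor you invoke serves a different purpose, namely the enlarged domain $[-.5,1.5]^{n+1}$ used for the Lipschitz constant. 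Finally, since all norms here are infinity norms, $X\subseteq\bar B(x^0,\rho_+)$ only requires $\rho_+\ge r$, not a circumradius $r\sqrt{n+1}$.
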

\begin{proof}
Let $X$ denote the hypercube $\bar{B}(x^0,r)$. 
Let us introduce additional notation for frequently used quantities.
Let $\phi$ stand for $\Vert f(x^0)\Vert$, and
let $\pi$ 
stand for $\Vert \mpinv{f'(x^0)}\Vert$.
The proof is
divided into two cases by the relative value of the two
terms in the definition of condition number \eref{cond_def} 
evaluated at $x^0$.
Let 
\begin{equation}
c_1=\sqrt{64(n+1)^3\theta\cdot (1.5)^{\max(m_1,m_2,\ldots,m_{n+1})}\max(m_1,m_2,\ldots,m_{n+1})^2}.
\label{eq:k1def}
\end{equation}

\begin{flushleft} \textbf{Case 1: } $c_1M\pi \le \sqrt{M/\phi}$.
\end{flushleft}
Note that this case encompasses the possibility that $\phi=0$, i.e.,
that $x^0$ lies on a solution curve.  On the other hand, this
case requires $\pi<\infty$, i.e., $\rank(f'(x^0))=n$.
Note that $M\pi$ is the second term of \eref{cond_def} evaluated
at $x^0$ while $\sqrt{M/\phi}$ is the first term.  Therefore,
$\cond{f}\ge \min(M\pi,M/\phi)\ge \min(M\pi,c_1^2 M^2\pi^2).$
Since $c_1M\pi\ge 1$ and $c_1^2M\pi\ge 1$
(combine \eref{eq:Mpinvineq} and \eref{eq:k1def}), it follows
that the second term dominates the first, hence 
\begin{equation}
\cond{f}\ge M\pi.
\label{eq:case1condf}
\end{equation}
By the hypothesis for this case, $c_1^2 M^2\pi^2\le M/\phi$, i.e.,
\begin{equation}
\phi\le 1/(c_1^2M\pi^2).  
\label{eq:case1phibd}
\end{equation}

Let $v(x^0)$ be the unit-length null vector of $f'(x^0)$, i.e.,
$f'(x^0)v(x^0) = 0$, $\norm{v(x^0)}=1$.  Let $i$ be such that
$|v_i(x^0)| = 1$. Define 
$$h(x) = \left(\begin{array}{cc}f(x)\\ x_i - k\end{array}\right)$$
for an arbitrary $k\in[
x^0_i - r, x^0_i + r]$. By using the facts that
$\norm{v(x^0)}=1$, $|v_i(x^0)| = 1$, and
\[ h'(x^0)^{-1} = \left(\begin{array}{c}
f'(x^0) \\ e_i^T \end{array} \right)^{-1} = \left(
\left(I-\frac{v(x^0)e_i^T}{v(x^0)^T e_i}\right)\mpinv{f'(x^0)}
,\frac{v(x^0)}{v(x^0)^T e_i} \right),
\]
where $e_i$ denotes the $i$th column of the identity matrix, it is
seen that
\begin{equation}
\label{eta_ineq} \eta \equiv \norm{h'(x^0)^{-1}h(x^0)} \leq
2\left( \norm{\mpinv{f'(x^0)}f(x^0)}  + r \right),
\end{equation}
for any $k \in [x^0_i-r,x^0_i+r]$. 
(Note that in the infinity norm, $\Vert I-ve_i^T/(v^Te_i)\Vert\le 2$, 
where $i$ is the index such that $\Vert v\Vert=|v^Te_i|$.)
The first parenthesized
term on the right-hand side of \eref{eta_ineq} is clearly bounded
by $\pi\phi$, which in turn is bounded by $1/(c_1^2M\pi)$ by
\eref{eq:case1phibd}.

It follows from \eref{eq:case1condf}, \eref{eq:Mpinvineq}, and
\eref{eq:k0def} that $c_0/\cond{f}\le 1$, hence
$(c_0/\cond{f})^2\le (c_0/\cond{f})$.  Combining this with
\eref{deltahatcond} 
and \eref{eq:case1condf}
yields
\begin{equation}
r\le c_0/\cond{f}\le c_0/(M\pi).
\label{eq:rbdcase1}
\end{equation}
Thus, proceeding from \eref{eta_ineq},
\begin{equation}
\eta\le \frac{2(c_0+1/c_1^2)}{M\pi}.
\label{eq:etabound}
\end{equation}

Note that $4(n+1)(1.5)^{\max(m_1,\ldots,m_{n+1})}\max(m_1,\ldots,m_{n+1})^2M$ is a Lipschitz
constant of $f'$ on $[-5,1.5]^{n+1}$ by \eref{eq:exLip}.  Consequently, $4(n+1)(1.5)^{\max(m_1,\ldots,m_{n+1})}\cdot\linebreak\max(m_1,\ldots,m_{n+1})^2M\pi$ is a Lipschitz
constant of  $\mpinv{f'(x^0)}f'$ on $[-5,1.5]^{n+1}$. From this bound,
we derive a Lipschitz constant
$\omega$ for
 $h'(x^0)^{-1}h'$ over $[-.5,1.5]^{n+1}$ as follows.
\begin{eqnarray}
\omega & = & \max_{y,z\in D;y\ne z}
\frac{\norm{h'(x^0)^{-1}\left(h'(y)-h'(z)
\right)} }{ \norm{y-z} } \nonumber \\
& = & \max_{y,z\in D;y\ne z}\frac{1}{\norm{y-z}} \cdot
\norm{\left(I-\frac{v(x^0)e_i^T}{v(x^0)^T
e_i}\right)\mpinv{f'(x^0)}\left(f'(y)-f'(z)\right) } \nonumber \\
& \leq &\max_{y,z\in D;y\ne z} \frac{2\norm{\mpinv{f'(x^0)}\left(f'(y)-f'(z)\right)
}}{\norm{y-z}} \nonumber
\\
& \leq &2\pi\cdot \max_{y,z\in D;y\ne z} \frac{\Vert f'(y)-f'(z)\Vert}
{\norm{y-z}} \nonumber
\\
& \leq &
8(n+1)(1.5)^{\max(m_1,\ldots,m_{n+1})}\max(m_1,\ldots,m_{n+1})^2M\pi. \nonumber
\end{eqnarray}
For the last line, we used
\eref{eq:exLip}.
But from \eref{hodef},
\begin{equation}
\hat \omega \leq \frac{(n+1)^2}{2} \theta \omega 
\leq 4(n+1)^3\theta(1.5)^{\max(m_1,\ldots,m_{n+1})}\max(m_1,\ldots,m_{n+1})^2M\pi
\label{eq:omegahat}
\end{equation}
where $\omega$ is the Lipschitz constant for $h'(x^0)^{-1}h'$ over
$[-.5,1.5]^{n+1}$
and $\hat \omega$ is the \emph{computed} Lipschitz constant for
$h'(x^0)^{-1}h'$ over $D$ as defined in (\ref{computedlip}).

Combining \eref{eq:etabound} with
\eref{eq:omegahat} yields
\begin{equation}
\label{pass_first} \eta \hat \omega < 8(n+1)^3\theta(1.5)^{\max(m_1,\ldots,m_{n+1})}
\max(m_1,\ldots,m_{n+1})^2(c_0+1/c_1^2).
\end{equation}
By choice of $c_0$ and $c_1$ in \eref{eq:k0def} and \eref{eq:k1def}
respectively, we see that $\eta\hat\omega <1/4$
for any $k \in [x^0_i-r,x^0_i+r]$, which is one of the conditions
for $X$ to pass the Kantorovich test.

For the other condition, note that $\sqrt{1-2h} \geq 1-2h$ for $0
\leq h \leq 1/2$. Therefore,
\begin{eqnarray}
\rho_- & = & \frac{1-\sqrt{1-2\eta\hat\omega}}{\hat \omega}
\nonumber \\
& \leq & 2\eta \nonumber \\
&\le & \frac{4(c_0+1/c_1^2)}{M\pi}\qquad\mbox{(by \eref{eq:etabound})} \nonumber\\
&\le & 8(n+1)(c_0+1/c_1^2)\max(m_1,\ldots, m_{n+1})\qquad\mbox{(by \eref{eq:Mpinvineq})}.
\label{rho_to_gamma}
\end{eqnarray}
By choice of $c_0$ and $c_1$, we conclude that $\rho_-<1/2$ and
therefore $X\subset [-.5,1.5]^{n+1}$, the domain for which $\omega$ is
a Lipschitz constant.
This proves that $X$ satisfies the Kantorovich conditions.

Finally, the associated explored region $X_E$ contains $X$ because
\begin{eqnarray*}
\rho_+ & = & \frac{1+\sqrt{1-2\eta\hat\omega}}{\hat \omega}
\nonumber \\
& \geq & \frac{1}{\hat \omega} \nonumber \\
& \geq & \frac{1}
{4(n+1)^3\theta(1.5)^{\max(m_1,\ldots,m_{n+1})}\max(m_1,\ldots,m_{n+1})^2M\pi}
\qquad\mbox{(by \eref{eq:omegahat})} \\
&\ge & r, \nonumber
\end{eqnarray*}
for any $k \in [x^0_i-r,x^0_i+r]$.
The last line follows from \eref{eq:k0def} and \eref{eq:rbdcase1}.

\begin{flushleft} \textbf{Case 2: } $c_1M\pi \ge \sqrt{M/\phi}$.
\end{flushleft}
Note that this case encompasses the possibility that $\pi=\infty$,
i.e., $\rank(f'(x^0))<n$.  On the other hand, this case requires
$\phi>0$, i.e., $x^0$ is not a root.
Since $\cond{f}\ge\min(M\pi,M/\phi)
\ge\min(\sqrt{M/\phi}/c_1,M/\phi)$, we conclude
\begin{equation}
\cond{f}\ge \sqrt{\frac{M}{c_1^2\phi}},
\label{eq:case2condbd}
\end{equation}
which implies by \eref{deltahatcond} that
\begin{equation}
r \le \frac{c_0^2c_1^2\phi}{M}.
\label{eq:case2rbd}
\end{equation}
Select an arbitrary $x\in B(x^0,r)$.  
We now derive a bound on $f(x)-f(x^0)$ by
applying the fundamental theorem
of calculus.
\begin{eqnarray*}
\Vert f(x)-f(x^0)\Vert &=&
\left\Vert\int_0^1 f'(x^0+t(x-x^0))(x-x^0)\,dt \right\Vert\\
&\le& \int_0^1 \Vert f'(x^0+t(x-x^0))\Vert\,dt \cdot \Vert x-x^0\Vert\\
&\le & r\int_0^1 
2(n+1)\max(m_1,\ldots,m_{n+1})M\,dt \qquad\mbox{(by \eref{eq:fpMbound})}\\
&=&
2r(n+1)\max(m_1,\ldots,m_{n+1})M \\
&\le &
2(n+1)c_0^2c_1^2\max(m_1,\ldots,m_{n+1})\phi \qquad\mbox{(by \eref{eq:case2rbd})}
\\
&\le&
\frac{\phi}{2\theta} \qquad\mbox{(by \eref{eq:k0def} and \eref{eq:k1def}).}
\end{eqnarray*}
Thus, by definition of $\phi$,
\begin{equation}
\label{before_define_fhat} \theta\cdot\norm{f(x)-f(x^0)} < \norm{f(x^0)}.
\end{equation}
Define $\hat{f}(\hat{x})$ such that
\begin{eqnarray}
\hat{f}(\hat{x}_1,\hat{x}_2,\ldots, \hat{x}_{n+1}) &=& f(
2r\hat{x}_1+x^0_1-r, 2r\hat{x}_2+x^0_2-r,\ldots \nonumber
\\ && \hspace{11 pt} 2r\hat{x}_{n+1}+x^0_{n+1}-r). \label{rescale}
\end{eqnarray}
In other word, $\hat{f}$ is a polynomial that
reparametrizes with $[0,1]^{n+1}$ the surface defined by $f$ over $X$.
In terms of $\hat{f}$, (\ref{before_define_fhat}) is equivalent to
\[
\theta\cdot\norm{\hat{f}(\hat{x})-\hat{f}(\hat{x}^0)} <
\norm{\hat{f}(\hat{x}^0)}
\]
for an arbitrary $\hat{x} \in [0,1]^{n+1}$, where $\hat{x}$ is the rescaled
$x$, and $\hat{x}^0$ is the rescaled $x^0$ according to
(\ref{rescale}). In particular,
\begin{equation}
\label{beforebi} \theta\cdot\max_{\hat{x} \in
[0,1]^{n+1}}\norm{\hat{f}(\hat{x})-\hat{f}(\hat{x}^0)} <
\norm{\hat{f}(\hat{x}^0)}.
\end{equation}
Let $g(\hat{x}) \equiv \hat f(\hat{x})-\hat f(\hat{x}^0)$. 
By (\ref{theta_eq}),
\begin{equation}
\label{fromlemma} \norm{c_{i_1,\ldots,i_{n+1}}} \leq \theta\cdot\max_{\hat{x}
\in [0,1]^{n+1}}\norm{g(\hat{x})},
\end{equation}
for any control point $c_{i_1,\ldots,i_{n+1}}$ of $g(\hat{x})$, which is
equivalent to
\begin{equation}
\label{shifted_theta} \norm{a_{i_1,\ldots,i_{n+1}}-\hat f(\hat{x}^0)} \leq
\theta\cdot\max_{\hat{x} \in
[0,1]^{n+1}}\norm{\hat f (\hat{x})-\hat f(\hat{x}^0))},
\end{equation}
for any control point $a_{i_1,\ldots,i_{n+1}}$ of $\hat f(\hat{x})$ (since a constant
additive term to a polynomial corresponds to a translation of all of
its control points). Substituting
(\ref{shifted_theta}) into the left-hand side of (\ref{beforebi})
yields
\begin{equation}
\label{convexnozero} \norm{a_{i_1,\ldots,i_{n+1}}-\hat f(\hat{x}^0)} <
\norm{\hat f(\hat{x}^0)},
\end{equation}
which implies that the convex hull of the control points of
$\hat f(\hat{x})$ does not contain the origin. 
Therefore, $X$ passes the exclusion test.
\end{proof}

\section{Computational results}
\label{section_comp}
The KTS algorithm is implemented in Matlab and is tested against a
number of problem instances with three equations and four variables of varying condition numbers.  Higher dimension problems, especially the ill-conditioned instances, require too much computation time due to the large number of hypercubes that must be considered.  We estimate the condition number by evaluating $\min \left\{ 1/\norm{f(x^0)}, \norm{\mpinv{f'(x^0)}} \right\}$ at the center point $x^0$ of every square considered by KTS during its execution and also at uniformly sampled points in $[0,1]^4$.  

Table \ref{table_res} compares the efficiency of KTS for each test
problem with its condition number.  The total number of hypercubes
examined by KTS during the entire computation, the width of the
smallest hypercube among those examined, and the maximum number of
Newton iterations to converge are reported.  Note that the high
number of Newton iterations of some test cases (the $9$th and $10$th rows of Table \ref{table_res}) is because the Jacobians of the zeros are ill-conditioned causing large roundoff error in the computation of the Newton iterations.
%

\begin{table}
\begin{center}
\begin{tabular}{|r|r|r|r|r|}
\hline \multirow{2}{*}{$\max \{m_1,m_2,m_3,m_4 \}$} &
\multicolumn{1}{|c|}{\multirow{2}{*}{$\cond{f}$}} & \multicolumn{1}{|c|}{Number of} & \multicolumn{1}{|c|}{Smallest} &
Max. Newton \\
& & hypercubes & \multicolumn{1}{|c|}{width} & \multicolumn{1}{|c|}{iterations} \\
\hline 
 2 & 6.60 & 641 & .03125 & 3 \\   
 2 & 11.5 & 3089 & .01563 & 3 \\ 
 2 & 15.5 & 673 & .03125 & 3 \\  
 3 & 24.0 & 145 & .06250 & - \\ 
 3 & 50.0 & 4273 & .00781 & 3 \\ 
 3 & 120 & 1009 & .00391 & 3 \\ 
 3 & $2.40 \times 10^3$ & 18177 & .00049 & 3 \\ 
 3 & $9.88 \times 10^4$ & 15841 & .00195 & 4 \\ 
 3 & $1.86 \times 10^7$ &  28881 & .00098 & 7 \\ 
 3& $2.66 \times 10^7$ & 29649 & .00098 & 7 \\ 
\hline
\end{tabular}
\end{center}
\caption{Efficiency of KTS algorithm on problems of different
condition numbers.\label{table_res}}
\end{table}

\section{Conclusion and future directions}
\label{section_conclusion}

We present the KTS algorithm for solving systems of polynomial equations with one more unknowns than the number of polynomials.  By using the combination of subdivision and Kantorovich's theorem, our algorithm can take advantage of the
quadratic convergence of Newton's method without the problems of
divergence and missing some solutions that commonly occur with
Newton's method. We also show that the efficiency of KTS has an
upper bound that depends solely on the condition number of the
problem instance.  Nevertheless, there are a number of questions left
unanswered by this article such as
\begin{itemize}
\item \textbf{Tighter bound on $r$.}
Some of the bounds in Section~\ref{section_analysis} appear
loose and could potentially underestimate the performance
of the algorithm.  For example, the scalars may be loose,
and one step in the argument preceding \eref{eq:rbdcase1}
uses the weak bound that $x^2\ge x$ since $x\ge 1$.  Thus, it
seems like there is room for tightening the analysis.  

A second
limitation of our analysis is that we establish a lower bound on
the smallest hypercube size, which indirectly places an upper bound on
the total number of hypercubes explored by the KTS algorithm (and hence its
running time).  This
upper bound, however, is usually far from tight as illustrated by
our computational experiments.  Thus, a different analysis that
addresses the number of hypercubes more directly would be useful.

\item \textbf{Using KTS in floating point arithmetic}. In the
presence of roundoff error, we may need to make adjustments for
KTS to be able to guarantee that the computed intersections are
accurate and that all of the solutions are found.

\item \textbf{Handling singular solutions and degenerate instances}.  Instances containing singular solutions or degeneracy are ill-posed, and our proposed KTS algorithm does not aim at handling such instances.  Certain applications, however, look for singular solutions or solutions to degenerate instances.  Further investigation on extending KTS to handle these situations would be beneficial.

\item \textbf{Other representations of $f$}.  As mentioned in the introduction, we
assume that $f$ is specified  by its $(m_1+1)(m_2+1)\cdots (m_{n+1}+1)$ 
Bernstein-B\'ezier control
points.  In many applications, however, there may be a more parsimonious representation.
For example, in the SSI problem, two surfaces of bi-degree $(m_1,m_2)$
are separately each represented by
$(m_1+1)(m_2+1)$ control points, hence $f=p_1-p_2$ is fully described
by $2(m_1+1)(m_2+1)$ control points rather than the
$(m_1+1)^2(m_2+1)^2$ control points needed for the general case.
It would be useful if the KTS algorithm could work 
directly on a more concise representation.

\item \textbf{Extension to general underdetermined polynomial systems}.  Polynomial systems with $n$ equations and $m > n+1$ unknowns generally contain higher dimension solutions.  The subdivision and exclusion test ideas still hold for the general case, but a different technique is needed to trace an approximation to the intersection surface.

\end{itemize}

\section{Acknowledgements}
We benefited from a helpful discussion with F.~Cucker about
condition numbers.

\bibliography{allbib}

\end{document}